\providecommand{\tabularnewline}{\\}
\providecommand{\algorithmname}{Algorithm}
\theoremstyle{plain}
\theoremstyle{definition}
\theoremstyle{plain}
\theoremstyle{plain}
\newtheorem{theo}{Theorem}
\newtheorem{remark}{Remark}
\renewcommand\figurename{Fig.}
\newcommand*{\QEDA}{\hfill\ensuremath{\blacksquare}}
\begin{document}
\title{Analysis and Optimization of Service Delay for Multi-quality Videos in Multi-tier Heterogeneous Network with Random Caching}

\author{Xuewei Zhang, Tiejun Lv, ~\IEEEmembership{Senior Member,~IEEE}, Yuan Ren,\\
Wei Ni, ~\IEEEmembership{Senior Member,~IEEE}, and Norman C. Beaulieu, ~\IEEEmembership{Fellow,~IEEE}
\thanks{
The financial support of the National Natural Science Foundation of
China (NSFC) (Grant No. 61671072 and 61801382) and the Beijing Natural Science
Foundation (No. L192025) is gratefully acknowledged.
(\emph{Corresponding author: Tiejun Lv.})

X. Zhang and Y. Ren are with the School of Communication and Information
Engineering, Xi'an University of Posts and Telecommunications (XUPT), Xi'an
710121, China (e-mail: zhangxw@bupt.edu.cn, renyuan@xupt.edu.cn).
X. Zhang is also with the School of Information and Communication
Engineering, Beijing University of Posts and Telecommunications (BUPT), Beijing
100876, China.

T. Lv is with the School of Information and Communication
Engineering, Beijing University of Posts and Telecommunications (BUPT), Beijing
100876, China (e-mail: lvtiejun@bupt.edu.cn).

W. Ni is with Data61, Commonwealth Scientific and Industrial Research,
Sydney, NSW 2122, Australia (e-mail: wei.ni@data61.csiro.au).

N. C. Beaulieu is with the School of Information and Communication Engineering
and the Beijing Key Laboratory of Network System Architecture and
Convergence, Beijing University of Posts and Telecommunications, Beijing
100876, China (e-mail: nborm@bupt.edu.cn).
}}

\maketitle

\begin{abstract}
Aiming to minimize service delay,
we propose a new random caching scheme in device-to-device (D2D)-assisted heterogeneous network.
To support diversified viewing qualities of multimedia video services,
each video file is encoded into a base layer (BL) and multiple enhancement layers (ELs) by scalable video coding (SVC).
A super layer, including the BL and several ELs, is transmitted to every user.
We define and quantify the service delay of multi-quality videos by deriving successful transmission probabilities
when a user is served by a D2D helper, a small-cell base station (SBS) and a macro-cell base station (MBS).
We formulate a delay minimization problem subject to the limited cache sizes of D2D helpers and SBSs.
The structure of the optimal solutions to the problem is revealed,
and then an improved standard gradient projection method is designed to effectively obtain the solutions.
Both theoretical analysis and Monte-Carlo simulations validate
the successful transmission probabilities.
Compared with three benchmark caching policies,
the proposed SVC-based random caching scheme is superior in terms of reducing the service delay.
\end{abstract}
\begin{IEEEkeywords}
Service delay, heterogeneous network, scalable video coding (SVC), random caching, super layer.
\end{IEEEkeywords}
\section{Introduction}
\renewcommand\figurename{Fig.}
The service modes of wireless communications are transferring from connection-oriented services \cite{Gupta2012H},
such as voice call and short message, to content-oriented services,
such as on-demand multimedia video \cite{Zhang2018Energy2}.
The amount of data traffic is experiencing a more significant surge than ever before.
It is predicted that the total
amount of data traffic will reach 100 exabytes by 2023,
and multimedia video services will account for most of the 100 exabytes \cite{2016Ericsson}.
Under this circumstance,
backhaul with finite bandwidth is expected to become increasingly restrictive
when retrieving requested contents from the core network to wireless edges \cite{Zhang2018Near},
i.e., co-existing base stations (BSs) and user equipments.
The limited capacity of backhaul is one of the most restrictive factors,
especially for time-sensitive and real-time video services.
Aiming to relieve this pressing limitation of backhaul
and mitigate service delays,
wireless caching is proposed as a promising technique,
and attracts strong attention in the Fifth Generation (5G) communication networks and beyond
\cite{Tao2015Content}.

With proactive caching enabled in wireless edges,
video files requested by users can be pre-fetched to the local storage of wireless edges via backhauls \cite{Chen2017Cooperative}.
The content placement is performed in light-traffic time periods.
Cached contents can be delivered to the users, if requested.
According to the types of cached contents,
wireless caching can be typically classified into
uncoded caching and coded caching.
Earlier studies focused on the design of uncoded caching \cite{Zhang2017Multicast,Tao2015Content},
in which uncoded video files, especially those popular ones, are placed in the local caches of wireless edges.
Later, wireless caching is extended to coded caching \cite{Xu2017Fundamental,Lampiris2018Adding},
where complete videos are firstly encoded into different data packets
and then these coded packets are locally stored by the proposed caching strategies.

Among many caching schemes, random caching, also known as probabilistic caching,
is an important class of wireless caching \cite{Wen2017Random, cui2016analysis,Zheng2017Probabilistic,Zhang2018Energy},
where complete video files or their combinations are prefetched to be cached under a certain caching distribution
which can be optimized.
In \cite{Wen2017Random} and \cite{cui2016analysis},
by optimizing the successful transmission probabilities,
the random caching distributions were determined.
The authors of \cite{Zheng2017Probabilistic} derived the content hit probability and its approximation for throughput analysis.
By maximizing these two metrics, the caching probabilities were optimized.
In our recent paper \cite{Zhang2018Energy},
we studied random caching in heterogeneous network.
The random caching probabilities were optimized to maximize the energy efficiency of the considered network.

With no assistance of BSs, device-to-device (D2D) communications allow users to establish direct links with their nearby neighbors.
This helps reduce the overall transmission power of the system,
and improve the system throughput \cite{Wu2017Energy, Ahmed2018A}.
By integrating wireless caching into D2D communications,
data traffic can be offloaded from small-cell BSs (SBSs) and macro-cell BSs (MBSs),
relieving traffic congestion and reducing service delay \cite{chen2016cache,Chen2018Caching,Deng2018The,Zhang2016Efficient}.
Chen \emph{et al}. \cite{chen2016cache} evaluated the offloading gain and energy cost of D2D helpers,
when the offloading opportunity was maximized.
In \cite{Chen2018Caching}, a machine learning model was proposed to capture the content popularity and request preference
in D2D communications.
The authors of \cite{Deng2018The} focused on the energy cost of D2D helpers,
and proposed two hybrid caching schemes to reduce the cost.
To optimize the system throughput,
Zhang \emph{et al}. \cite{Zhang2016Efficient} took
both D2D-link scheduling and resource allocation into account in single-hop D2D communications.

Given the limited backhaul capacity,
ever-changing channel conditions and varying user requirements,
multi-quality video services are in increasingly high demand,
including multimedia services for standard definition videos (SDVs) and high definition videos (HDVs).
To provide diversified perceptual viewing experiences to mobile users,
scalable video coding (SVC), developed for advanced video coding (AVC) \cite{Schwarz2007Overview},
has attracted a lot of interest.
With the aid of SVC, each video can be divided into
a base layer (BL) and several enhancement layers (ELs) \cite{Guo2018Multi}.
The BL contains the most basic information of the scalable video,
and the file only containing the BL can be decoded as SDV, which has the lowest viewing quality.
Successive ELs, together with the BL, can provide HDV.
More layers provide better video quality,
and the video with all divided layers can exhibit the most excellent viewing quality \cite{Ostovari2015Scalable}.
More technical details for the encoding and decoding process of SVC can refer to \cite{Schwarz2007Overview}.
SVC has been applied to wireless caching in the literature.
The authors of \cite{Zhan2018SVC} maximized the total throughput of cache-enabled heterogeneous network
by jointly optimizing SVC-based retrieving decision and data rate allocation.
In \cite{Zhang2017Layered}, given the layered structure of video files,
the data traffic delivered over backhaul was minimized.
In our earlier work \cite{Zhang2018Near},
we proposed an SVC-based layer placement scheme
and maximized the average amount of offloaded traffic,
so that most data traffic was retrieved from SBSs and the pressure was relieved on the MBS.

For large-scale video transmissions, the limited backhaul capacity is often the bottleneck of the system.
Congestions in backhaul would lead to unacceptable latency.
Hence, effective performance metric of service delay is crucial,
and needs to be carefully designed \cite{Chen2015Delay}.
Relying on queuing theory,
the authors of \cite{Amer2018Inter} derived the average delay for unit request,
and minimized the delay with the greedy algorithm.
A weighted average delay for unit request was considered in \cite{Amer2018Optimizing},
through which the bandwidth allocation and caching probability distribution were yielded.
In \cite{Li2018Learning}, a learning-based caching scheme was devised in D2D-assisted network,
with the objective of minimizing the average transmission delay.
The delay was also minimized by jointly designing the caching and user association strategies in \cite{Wang2016Joint}.
As mentioned earlier, mobile users can request different viewing qualities according to their preference or network states,
while the study of SVC-supported wireless caching is still in a very earlier stage.
On the other hand, provided SVC is in place,
the unnecessary video layers may not need to be delivered.
This can significantly reduce the service delay.
Therefore, delay analysis of SVC-based video retrievals is important.

This paper presents a new random caching scheme in D2D-assisted three-tier heterogeneous network,
consisting of D2D, SBS and MBS tiers, to minimize the service delay.
To provide diversified viewing qualities of video services,
each video file is encoded by SVC.
A super layer, containing the BL and several ELs, is delivered for providing multi-quality multimedia video services.
A user can be served by the nearest D2D helper or SBS which caches the requested super layers.
When requested contents cannot be locally provided,
the nearest MBS is responsible for retrieving the contents
from the core network via its backhaul at the additional cost of resource and latency.

In the proposed SVC-based random caching scheme,
D2D helpers and SBSs randomly select super layers to cache,
and the caching probabilities can adapt to the delay performance of the three-tier heterogeneous network.
The key contributions can be summarized as follows.
\begin{itemize}
\item Any requested videos are encoded by SVC,
and super layers, each of which consists of a BL and several successive ELs, are cached randomly.
By sending super layers to mobile users, diversified viewing qualities can be achieved.
This can avoid SVC decoding at the users, hence reducing the service delay,
as compared to conventional separate transmissions of different layers.
\item We define the partial service delays
when a user is served by a D2D helper, an SBS, or an MBS.
We derive the corresponding successful transmission probabilities,
from which the expressions for partial service delays can be attained,
and so can the overall averaged service delay.
\item Subject to the limited cache sizes of D2D helpers and SBSs,
the delay minimization problem is formulated.
The structure of the optimal solution to the delay minimization problem is discussed.
An improved standard gradient projection method is accordingly developed to provide sub-optimal solutions for the random caching probabilities.
\end{itemize}

The rest of this paper is arranged as follows.
Section II presents the network model, channel model and SVC-based random caching scheme.
In Section III, we first define the service delay,
and then derive the successful transmission probabilities to establish the expression for service delay.
In Section IV, the delay minimization problem is formulated and solved.
Numerical results are presented in Section V.
Finally, concluding remarks are provided in Section VI.
\section{System Model}
In this section, we provide the system model of the considered heterogeneous network,
including network model, channel model and SVC-based random caching protocol.
\subsection{Network Model}
The three-tier heterogeneous
network we consider consists of an MBS tier, an SBS tier and a D2D tier.
The SBSs and D2D helpers have limited cache sizes,
and can store parts of requested video contents during off-peak hours.
The MBSs, SBSs and D2D helpers are single-antenna and  randomly distributed,
following independently and identically distributed (i.i.d.) Poisson Point Processes (PPPs)
$\mathrm{\Phi_{m}}$, $\mathrm{\Phi_{s}}$ and $\mathrm{\Phi_{d}}$
with density parameters $\lambda_{\rm{m}}$, $\lambda_{\rm{s}}$ and $\lambda_{\rm{d}}$, respectively.
Some of the users act as D2D helpers;
and the rest can  receive the requested contents from the D2D helpers, SBSs or MBSs.
Without loss of generality,
we randomly choose one user that demands a video service as the typical user,
and set the position of this user as the origin of the observed network
\cite{Wen2017Random,Chen2017Cooperative}\footnote{
The mobility model is not explicitly considered in this paper.
Nevertheless, the statistical analysis presented in this paper can
implicitly capture the mobility of users.
This is because, under a given PPP realization,
any random position shift of this PPP can be regarded as another new PPP realization.
Considering the user-centric scenario, the movement of
the user can be interpreted to the position shift of the BSs with reference to this user.
}.
Taking the location of the typical user as the center,
the D2D helpers are distributed in the circular area with radius $r_{\rm{c}}$ \cite{chen2016cache,Chen2018Caching}\footnote{
The D2D helpers often own limited communication capacities,
since the transmit power of these helpers is kept in a low level compared to SBSs and MBSs.
Owing to this fact, the potential serving helpers are confined in the area that are close to the user.}.
Compared with ordinary D2D equipments,
the adopted D2D helper is endowed with limited storage capacity.
When the requested video contents can be found in the local caches of D2D helper,
it can provide cached contents to its neighboring users by D2D communications.
With the aid of cache-enabled D2D helpers,
more multimedia data traffic can be locally provided,
thus alleviating the heavy traffic burden of BSs and backhaul links.
The SBSs that are likely to serve the user are distributed in the cell with radius $r_{\rm{d}}$.
We assume $r_{\rm{d}}\geq r_{\rm{c}}$.
Likewise, limited cache storage is also allocated to each SBS.
The D2D helpers operate in the overlay mode \cite{Zhang2016Socially}.
The spectrums allocated to the BSs and D2D helpers are orthogonal.
When the typical user is receiving contents from the D2D helper,
interference only comes from other non-serving helpers.
The MBSs and SBSs are also assigned with orthogonal spectrums.
Therefore, the inter-tier interference can be avoided \cite{Zhang2017Cost}.
The D2D helpers are labeled in the ascending order of their distances from the user,
and so are SBSs and MBSs.

Line-of-sight (LoS) propagation typically dominates
small-scale wireless communications \cite{Wildemeersch2014Successive,Xu2017Modeling}.
Therefore, we assume that the nearest of the nodes storing the requested content is chosen as the serving one.
Specifically, in the collaborative area,
the nearest D2D helper which caches the requested video content, denoted by $\rm{d}_{0}$,
is selected as the serving helper.
The detailed content request and user connection steps are illustrated as follows.
At the beginning of each content transmission process,
the MBS is able to collect user requests,
and then the MBS broadcasts these video requests to D2D helpers.
Afterwards, each helper checks its content library to determine whether the requested contents are locally cached.
If some helpers store the requested content,
they will reply to the MBS, and the MBS will choose the nearest helper in user's collaborative area as the serving one.
Based on the steps shown above, the communication link between this serving helper and the content-demanding user is established,
and then the requested contents can be delivered to the user via D2D communications\footnote{
In the D2D-assisted cache-enabled heterogeneous networks,
the content-demanding users and D2D helpers can connect to the MBS for synchronization.
When the synchronization process finishes, the connections between them will be suspended,
and then the D2D data transmission process can be performed.
If the control/user plane decoupled network is considered, the content-demanding users and D2D helpers
are able to connect to the MBS in the control plane for synchronization,
and then D2D data communication can be completed in the user plane, as proposed in \cite{Song2017Control}.}.
If all D2D helpers in the collaborative area fail to provide the requested video file,
the nearest of the SBSs storing the content, denoted by $\rm{s}_{0}$, serves the user.
In the worst case where the requested video cannot be retrieved from the local caches of D2D helpers and SBSs,
the nearest MBS, denoted by $\rm{m}_{0}$, retrieves the content for the user
via its capacity-constrained backhaul.
For illustration convenience,
the D2D helpers and SBSs that cache the requested video files are referred to as \emph{potential serving} D2D helpers and SBSs, respectively.
An example of the considered network model is provided in Fig. 1.
\begin{figure}[t]
\centering{}\includegraphics[scale=0.25]{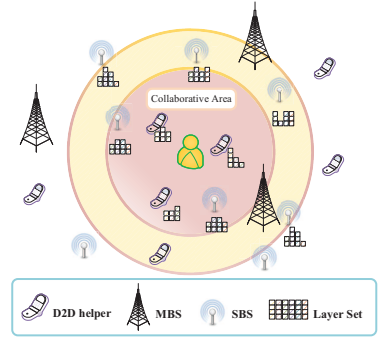}
\caption{In the considered network,
the typical user is located at the center,
and can be served by a D2D helper, an SBS or an MBS.
For example, in this model, each video file is divided into four layers.}
\label{System_Model}
\end{figure}
\subsection{Channel Model}
In modern wireless cellular networks, the users and BSs are densely deployed,
and the co-channel interference generally dominates over the background additive white Gaussian noise at the user's side  \cite{Xu2017Modeling,Wen2017Random}.
Thus, we consider the interference-limited network,
where the noise can be negligible compared with the severe co-channel interference.
When the typical user can obtain the requested video file from the nearest potential serving D2D helper,
the received signal-to-interference ratio (SIR) at the user can be written as
\begin{align}
\mathrm{SIR}_{\rm{d}}=\frac{\left|h_{0}^{\rm{d}}\right|^{2}(r_{0}^{\rm{d}})^{-\alpha_{\rm{d}}}}{\sum_{k\in\mathrm{\Phi_{d}}\setminus \rm{d}_{0}}\left|h_{k}^{\rm{d}}\right|^{2}(r_{k}^{\rm{d}})^{-\alpha_{\rm{d}}}}
\end{align}
where $h_{0}^{\rm{d}}$ and $h_{k}^{\rm{d}}$ are the small-scale fading
from the D2D helper $\rm{d}_{0}$ and other non-serving D2D helpers,
obeying the complex Gaussian distribution with zero mean and unit variance,
i.e., $\mathcal{CN}(0,1)$;
$r_{0}^{\rm{d}}$ is the distance between $\rm{d}_{0}$ and the user;
$r_{k}^{\rm{d}}$ is the distance between helper $k$ and the user;
and $\alpha_{\rm{d}}$ is the path loss exponent from the D2D helpers to the user, which satisfies $\alpha_{\rm{d}}>2$.

If the user fails to retrieve the contents from the local cache of the D2D helpers,
the nearest potential serving SBS is assigned to transmit the videos.
The received SIR achieved by SBS $\rm{s}_{0}$ can be written as
\begin{align}
\mathrm{SIR}_{\rm{s}}=\frac{\left|h_{0}^{\rm{s}}\right|^{2}(r_{0}^{\rm{s}})^{-\alpha_{\rm{s}}}}{\sum_{k\in\mathrm{\Phi_{s}}\setminus \rm{s}_{0}}\left|h_{k}^{\rm{s}}\right|^{2}(r_{k}^{\rm{s}})^{-\alpha_{\rm{s}}}}
\end{align}
where $h_{0}^{\rm{s}}$ and $h_{k}^{\rm{s}}$ are the small-scale fading
from the SBS $\rm{s}_{0}$ and other non-serving SBSs, following $\mathcal{CN}(0,1)$;
$r_{0}^{\rm{s}}$ is the distance between $\rm{s}_{0}$ and the user;
$r_{k}^{\rm{s}}$ is the distance between SBS $k$ and the user;
and $\alpha_{\rm{s}}$ is the path loss exponent from the SBSs to the user, satisfying $\alpha_{\rm{s}}>2$.

When the requested video file needs to be retrieved from the core network through backhaul,
the user is served by the nearest MBS $\rm{m}_{0}$.
The SIR delivered by $\rm{m}_{0}$ can be written as
\begin{align}
\mathrm{SIR}_{\rm{m}}=\frac{\left|h_{0}^{\rm{m}}\right|^{2}(r_{0}^{\rm{m}})^{-\alpha_{\rm{m}}}}{\sum_{k\in\mathrm{\Phi_{m}\setminus}\rm{m}_{0}}\left|h_{k}^{\rm{m}}\right|^{2}(r_{k}^{\rm{m}})^{-\alpha_{\rm{m}}}}
\end{align}
where $h_{0}^{\rm{m}}$ and $h_{k}^{\rm{m}}$ are the small-scale fading from the nearest MBS $\rm{m}_{0}$ and other farther MBSs,
following $\mathcal{CN}(0,1)$; $r_{0}^{\rm{m}}$ is the distance between MBS $\rm{m}_{0}$ and the user;
$r_{k}^{\rm{m}}$ is the distance between MBS $k$ and the user;
and $\alpha_{\rm{m}}>2$ is the path loss exponent from the MBSs to the user.
\subsection{SVC-based Caching Scheme }
We aim to provide multi-quality multimedia video services to mobile users.
The requested video files are pre-processed by SVC into multiple layers,
and the number of divided layers is $L$.
Among these layers, $l=1$ indicates the BL,
and $l=2,...,L$ indicate the ELs.
The $L$-level video qualities can be provided to the user.
The video with quality level $l$ consists of
the content layers indexed from $1$ to $l$, a BL and $(l-1)$ successive ELs.
We define the video with quality level $l$ as super layer $l$.
The user with the super layer can decode the requested video with the preferred viewing quality.
In our recent works \cite{Zhang2018Near,Zhang2019Economical},
the BSs were designed to cooperatively transmit different video layers to the user.
At the user, successive interference cancelation (SIC) was employed to separate the individual signals from multiple serving SBSs,
and then the requested video file was decoded.
This would cause extra decoding latency and resource overhead.
In this research, with the super layer,
the user does not need to run SIC and SVC decoding, reducing computational complexity and service delay.

A video library is located in the core network, where there are $F$ video files
which can be requested by the user.
By applying SVC, each video file is divided into $L$ layers.
The size of each layer is $s_{f,l}$.
The size of super layer $l$ from the $f$-th video is $c_{f,l}=\sum_{k=1}^{l}s_{f,k}$\footnote{
In this paper, we consider the more general case,
where the sizes of video files or content layers do not need to be the same.
This means that the file size or the layer size has no effect on the design of caching protocol,
performance analysis and problem formulation.}.
The videos are ranked in the descending popularity order.
The Zipf law has been widely used to characterize the request probability \cite{breslau1999web};
however, this distribution is inadequate for wireless video requests.
To fill this void, the Mandelbrot-Zipf (M-Zipf) law has been developed for cellular video requests \cite{Lee2019Throughput},
where the request probability of the $f$-th video is given by
\begin{gather}
p(f)=\frac{(f+q)^{-\alpha}}{\sum_{n=1}^{F}(n+q)^{-\alpha}},\ f=1,2,...,F,\label{eq:c-1-1}
\end{gather}
\noindent where $\alpha$ is the skewness parameter to account for the degree of request concentration \cite{breslau1999web};
and $q$ is the plateau factor.
With larger $q$, there would be a smaller difference among the request probabilities of the most popular files.
If $q=0$, the M-Zipf distribution becomes the Zipf distribution.
The values of both $\alpha$ and $q$ can be experimentally determined through real-world data sets,
which is beyond the scope of this paper.

Regarding the request probability of video quality,
the preference for SDV of the $f$-th video is $g_{\rm{SDV}}(f)=\frac{f-1}{F-1}$ \cite{Wu2016Caching},
and therefore the preference for HDV is $g_{\rm{HDV}}(f)=1-g_{\rm{SDV}}(f)$.
When HDV is requested, all quality levels are assumed to have the same request popularity.
To this end, the probability of the perceptual preference for super layer $l$ from the $f$-th video is given by
\begin{gather}
p_{f,l}=\begin{cases}
p(f)\cdotp\frac{f-1}{F-1}, & l=1;\\
p(f)\cdotp\frac{F-f}{(F-1)(L-1)}, & l=2,...,L.
\end{cases}
\end{gather}

In the considered network,
D2D helpers and SBSs are capable of caching and transmitting super layers to the typical user,
where the caching probabilities are unknown.
In this paper, we design the random caching scheme for super layers in heterogeneous network.
The $f$-th video file with quality level $l$ is randomly placed in the local caches of
D2D helpers and SBSs with probabilities $p_{f,l}^{\rm{d}}$ and $p_{f,l}^{\rm{s}}$, respectively.
The optimized random caching probabilities are collected by the MBS,
and the MBS can determine whether to cache the super layers in D2D helpers and SBSs or not.
The super layers can be locally placed according to these probabilities.
With larger caching probability, the super layer is more likely to be cached and vise verse.
The probabilities will be optimized later.
For notational convenience, all values of $p_{f,l}^{\rm{d}}$ and $p_{f,l}^{\rm{s}}$ are collected in the matrices
$\bf{p^{\rm{d}}}\in\mathbb{R^{\mathrm{F\times L}}}$ and $\bf{p^{\rm{s}}}\in\mathbb{R^{\mathrm{F\times L}}}$, respectively.
\section{Service Delay Analysis}
In this section, we first establish the expressions for partial service delays
when the typical user is served by the nearest potential serving D2D helper, SBS or MBS.
To achieve the expressions for these delays,
the successful transmission probabilities are derived,
and in turn the overall service delay is obtained.
\subsection{Performance Analysis of Service Delay}
When referring to service delay,
the transmission of the requested super layer is successful.
The service delay is based on the successful transmission.
A successful transmission refers to the case that the received SIR
at the user exceeds a pre-defined quality-of-service (QoS) requirement,
so that the minimum data rate for data transmission can be guaranteed.

The D2D helpers cache the super layer $l$ of the $f$-th video file with probability $p_{f,l}^{\rm{d}}$,
and the potential serving D2D helpers caching the layer form a thinning PPP with density $p_{f,l}^{\rm{d}}\lambda_{\rm{d}}$.
According to the property of PPP, when there is no serving D2D helper in the collaborative area,
the probability is $\exp\left(-\lambda_{\rm{d}}p_{f,l}^{\rm{d}}\pi r_{\rm{c}}^{2}\right)$.
On the contrary, when there is at least one serving D2D helper in the area,
this probability is
\begin{align}
&a_{f,l}^{\rm{d}}=1-\exp\left(-\lambda_{\rm{d}}p_{f,l}^{\rm{d}}\pi r_{\rm{c}}^{2}\right)\label{pp1}
\end{align}
which is also called the D2D association probability.
Based on this, when the user is served by the nearest potential serving D2D helper,
the partial service delay is given by
\begin{align}
&D_{f,l}^{\rm{d}}=a_{f,l}^{\rm{d}}P_{f,l}\left(\mathrm{SIR}_{\rm{d}}\geq\theta\right)\frac{c_{f,l}}{W_{\rm{d}}\log_{2}(1+\theta)}\label{pp1}
\end{align}
\noindent where $P_{f,l}\left(\mathrm{SIR}_{\rm{d}}\geq\theta\right)$ is the successful transmission probability
when the serving D2D helper delivers super layer $l$ of the $f$-th video;
$\theta$ is the minimum QoS requirement of the user;
and $W_{\rm{d}}$ is the allocated system bandwidth for D2D tier.

Likewise, when there is no serving SBS in the cell,
the probability is $\exp\left(-\lambda_{\rm{s}}p_{f,l}^{\rm{s}}\pi r_{\rm{d}}^{2}\right)$,
and the SBS association probability is given by
\begin{align}
&a_{f,l}^{\rm{s}}=1-\exp\left(-\lambda_{\rm{s}}p_{f,l}^{\rm{s}}\pi r_{\rm{d}}^{2}\right).
\end{align}
There are two cases that the user will be served by the SBS.
The first case is that the user cannot find any D2D helper in the collaborative area with probability $1-a_{f,l}^{\rm{d}}$.
The second case is that the user can connect to the D2D helper
but the received signal strength cannot meet the minimum QoS requirement,
and the probability of this case is $a_{f,l}^{\rm{d}}(1-P_{f,l}\left(\mathrm{SIR}_{\rm{d}}\geq\theta\right))$.
Thus, the transmission delay when the user is served by the nearest potential serving SBS can be expressed as
\begin{align}
D_{f,l}^{\rm{s}}=&\left[(1-a_{f,l}^{\rm{d}})+a_{f,l}^{\rm{d}}(1-P_{f,l}\left(\mathrm{SIR}_{\rm{d}}\geq\theta\right))\right]\nonumber\\
&a_{f,l}^{\rm{s}}P_{f,l}\left(\mathrm{SIR}_{\rm{s}}\geq\theta\right)\frac{c_{f,l}}{W_{\rm{s}}\log_{2}(1+\theta)}\nonumber\\
=&\left(1-a_{f,l}^{\rm{d}}P_{f,l}\left(\mathrm{SIR}_{\rm{d}}\geq\theta\right)\right)\nonumber\\
&a_{f,l}^{\rm{s}}P_{f,l}\left(\mathrm{SIR}_{\rm{s}}\geq\theta\right)\frac{c_{f,l}}{W_{\rm{s}}\log_{2}(1+\theta)}\label{pp2}
\end{align}
\noindent where $P_{f,l}\left(\mathrm{SIR}_{\rm{s}}\geq\theta\right)$ is the successful transmission probability
when the serving SBS transmits the requested super layer to the user;
and $W_{\rm{s}}$ is the allocated system bandwidth for SBS tier.

When neither the potential serving D2D helpers nor SBSs can provide the requested super layer,
the nearest MBS becomes the serving node.
The requested super layer is first retrieved via backhaul from the core network,
and then sent to the user.
The service delay is caused by backhaul retrieval and downlink transmission.
To this end, the partial service delay is written as
\begin{align}
D_{f,l}^{\rm{m}}\nonumber=&\Big[1-a_{f,l}^{\rm{d}}P_{f,l}\left(\mathrm{SIR}_{\rm{d}}\geq\theta\right)-a_{f,l}^{\rm{s}}P_{f,l}\left(\mathrm{SIR}_{\rm{s}}\geq\theta\right)\nonumber\\
&+a_{f,l}^{\rm{d}}P_{f,l}\left(\mathrm{SIR}_{\rm{d}}\geq\theta\right)a_{f,l}^{\rm{s}}P_{f,l}\left(\mathrm{SIR}_{\rm{s}}\geq\theta\right)\Big]\nonumber\\
&\left(\frac{c_{f,l}}{R_{\rm{bh}}}+P\left(\mathrm{SIR}_{\rm{m}}\geq\theta\right)\frac{c_{f,l}}{W_{\rm{m}}\log_{2}(1+\theta)}\right)\nonumber\\
=&\left(1-a_{f,l}^{\rm{d}}P_{f,l}\left(\mathrm{SIR}_{\rm{d}}\geq\theta\right)\right)\left(1-a_{f,l}^{\rm{s}}P_{f,l}\left(\mathrm{SIR}_{\rm{s}}\geq\theta\right)\right)\nonumber\\
&c_{f,l}\left(\frac{1}{R_{\rm{bh}}}+\frac{P\left(\mathrm{SIR}_{\rm{m}}\geq\theta\right)}{W_{\rm{m}}\log_{2}(1+\theta)}\right)\label{pp3}
\end{align}
\noindent where $P\left(\mathrm{SIR}_{\rm{m}}\geq\theta\right)$ is the successful transmission probability from the nearest MBS;
$W_{\rm{m}}$ is the system bandwidth assigned for MBSs;
and $R_{\rm{bh}}$ is the backhaul data rate.

By taking the three types of service delay into account,
the overall service delay for a successful transmission event is written as
\begin{align}
 D\left(\bf{p^{\rm{d}}},\bf{p^{\rm{s}}}\right)&=\sum_{f=1}^{F}\sum_{l=1}^{L}p_{f,l}\left(D_{f,l}^{\rm{d}}+D_{f,l}^{\rm{s}}+D_{f,l}^{\rm{m}}\right).\label{eq:Delay}
\end{align}
From (\ref{pp1}) to (\ref{eq:Delay}),
we have the following findings:
\begin{itemize}
\item The random caching probabilities of the D2D helpers and SBSs
affect the association probabilities and the successful transmission probabilities,
and thus have a strong impact on the service delay.
In the following, given the limited cache sizes of D2D helpers and SBSs,
the random caching vectors ${\bf{p^{\rm{d}}}}$ and ${\bf{p^{\rm{s}}}}$ are meticulously designed.
\item The three partial service delays are not parallel with each other, and there are some trade-offs among them.
When more required contents can be found in local caches,
the user can be locally served by the D2D helper or the SBS.
In this case, the MBS is less likely to serve the user, and the time-consuming backhaul delivery can be avoided.
Thus, the overall service delay can be largely reduced.
\end{itemize}
\subsection{Successful Transmission Probabilities}
In Section III-A, we present the delay expressions for three cases,
in each of which the successful transmission probability is expected to be derived.

When the user is served by the nearest potential serving D2D helper,
two cases are summarized in the following:
\begin{itemize}
\item Case 1: In the collaborative area,
the geographically nearest D2D helper is the serving D2D helper,
and the probability of having the helper is $p_{f,l}^{\rm{d}}$.
\item Case 2: The geographically nearest helper is not the serving D2D helper,
and the serving helper is one of the farther helpers in the observed collaborative area.
The probability for this case is $\left(1-p_{f,l}^{\rm{d}}\right)$.
\end{itemize}

Bearing the two cases in mind,
the successful transmission probability of the D2D helper is written as
\begin{align}
&P_{f,l}\left(\mathrm{SIR}_{\rm{d}}\geq\theta\right)\nonumber \\
&=p_{f,l}^{\rm{d}}P_{f,l}^{1}\left(\mathrm{SIR}_{\rm{d}}\geq\theta\right)+\left(1-p_{f,l}^{\rm{d}}\right)P_{f,l}^{2}\left(\mathrm{SIR}_{\rm{d}}\geq\theta\right)
\end{align}
where $P_{f,l}^{1}\left(\mathrm{SIR}_{\rm{d}}\geq\theta\right)$ and $P_{f,l}^{2}\left(\mathrm{SIR}_{\rm{d}}\geq\theta\right)$
are the successful transmission probabilities with respect to Case 1 and Case 2, respectively.
Their expressions are provided in Theorem 1.

\begin{theo}
In the collaborative area with radius $r_{\rm{c}}$,
when the typical user is served by the nearest potential serving D2D helper,
the successful transmission probabilities for Case 1 and Case 2 are give respectively by
\begin{align}
P_{f,l}^{1}\left(\mathrm{SIR}_{\rm{d}}\geq\theta\right)=\begin{cases}
\frac{p_{f,l}^{\rm{d}}q_{f,l}^{\rm{d}}\big(\theta^{-\frac{2}{\alpha_{\rm{d}}}}\big)}{1-\exp\left(-\lambda_{\rm{d}}p_{f,l}^{\rm{d}}\pi r_{\rm{c}}^{2}\right)},  & if \; 0<p_{f,l}^{\rm{d}}\leq1;\\
0,  & if \; p_{f,l}^{\rm{d}}=0,
\end{cases}
\end{align}
\begin{align}
P_{f,l}^{2}\left(\mathrm{SIR}_{\rm{d}}\geq\theta\right)=\begin{cases}
\frac{p_{f,l}^{\rm{d}}q_{f,l}^{\rm{d}}(0)}{1-\exp\left(-\lambda_{\rm{d}}p_{f,l}^{\rm{d}}\pi r_{\rm{c}}^{2}\right)}, & if \; 0<p_{f,l}^{\rm{d}}\leq1;\\
0, & if \; p_{f,l}^{\rm{d}}=0,
\end{cases}
\end{align}
\noindent where
\begin{align}
&q_{f,l}^{\rm{d}}(x)=\frac{1-\exp\left(-\lambda_{\rm{d}}\pi r_{\rm{c}}^{2}\left(p_{f,l}^{\rm{d}}+\theta^{\frac{2}{\alpha_{\rm{d}}}}G_{\alpha_{\rm{d}}}(x)\right)\right)}{p_{f,l}^{\rm{d}}+\theta^{\frac{2}{\alpha_{\rm{d}}}}G_{\alpha_{\rm{d}}}(x)},\\
&G_{a}(b)=\int_{b}^{\infty}\frac{1}{1+x^{\frac{a}{2}}}\mathrm{d}x.
\end{align}
\end{theo}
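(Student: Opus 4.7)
The plan is to reduce each $P_{f,l}^i$ to a standard stochastic-geometry calculation for a PPP of interferers under Rayleigh fading. For a fixed case, I condition on the serving-helper distance $r_0^d=r$ and exploit $|h_0^d|^2\sim\mathrm{Exp}(1)$ to rewrite the SIR event as $|h_0^d|^2\ge\theta r^{\alpha_d}I$; taking the expectation over $h_0^d$ first turns the conditional success probability into the Laplace transform $\mathcal{L}_I(\theta r^{\alpha_d})$ of the aggregate interference. Using the probability generating functional of the interferer PPP and the substitution $u=(x/r)^2\theta^{-2/\alpha_d}$, this Laplace transform collapses to $\exp\bigl(-\pi\mu r^2\theta^{2/\alpha_d}G_{\alpha_d}((R/r)^2\theta^{-2/\alpha_d})\bigr)$, where $\mu$ is the effective interferer density and $R$ is the inner radius of its support. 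The two choices of $R$ separate the two cases.

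For Case 1 the geographically nearest helper of $\Phi_d$ is itself the server, so by Slivnyak the remaining interferers form a PPP of density $\lambda_d$ on $\{|x|>r\}$; this gives $R=r$ and hence the argument $\theta^{-2/\alpha_d}$ of $G_{\alpha_d}$. For Case 2 a non-caching helper sits strictly inside $B(0,r)$, so by decomposing $\Phi_d$ into independent caching/non-caching thinnings and combining their Laplace contributions the aggregate interference is well-modelled by a full-plane PPP of density $\lambda_d$; this gives $R=0$ and the argument $0$ of $G_{\alpha_d}$. Meanwhile, the distance $r_0^d$ of the nearest potential server restricted to $[0,r_c]$ has density $2\pi p_{f,l}^d\lambda_d r\exp(-\pi p_{f,l}^d\lambda_d r^2)$, and multiplying by the Laplace exponential yields the integrand $2\pi p_{f,l}^d\lambda_d r\exp\bigl(-\pi\lambda_d r^2\bigl(p_{f,l}^d+\theta^{2/\alpha_d}G_{\alpha_d}(x)\bigr)\bigr)$ for $x\in\{\theta^{-2/\alpha_d},0\}$. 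A direct closed-form integration on $[0,r_c]$ recovers $p_{f,l}^d\,q_{f,l}^d(x)$, and dividing by the association probability $a_{f,l}^d$ produces the formulas in the theorem. The boundary case $p_{f,l}^d=0$ is handled by convention, since no potential server exists.

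The main obstacle will be the bookkeeping in Case 2. One has to argue carefully that the joint interference from non-caching helpers inside $B(0,r)$ and caching helpers outside it can be absorbed into a single Laplace factor whose argument is $G_{\alpha_d}(0)$. This rests on the independence of the two thinnings (so their Laplace transforms multiply) and on the integrability of the radial kernel $\theta(r/x)^{\alpha_d}/(1+\theta(r/x)^{\alpha_d})\cdot x$ down to the origin, which is guaranteed by $\alpha_d>2$. Once that reduction is in place the remainder of the argument is purely algebraic: the elementary Gaussian-type integration on $[0,r_c]$ and the final normalisation by $a_{f,l}^d$.
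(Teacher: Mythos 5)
Your proposal follows essentially the same route as the paper's Appendix A: condition on the serving distance with the truncated nearest-caching-helper density, convert the Rayleigh-fading SIR event into the interference Laplace transform via the PGFL, distinguish the two cases by the inner radius of the interferer field (yielding the arguments $\theta^{-2/\alpha_{\rm d}}$ and $0$ of $G_{\alpha_{\rm d}}$, with Case 2 obtained by multiplying the inner- and outer-annulus Laplace factors), and finish with the elementary radial integration over $[0,r_{\rm c}]$ and the $p_{f,l}^{\rm d}=0$ convention. The argument is correct and no substantive difference from the paper's proof needs to be flagged.
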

\begin{IEEEproof}
Refer to Appendix A.
\end{IEEEproof}

\begin{remark}
The successful transmission probability is higher in Case 1 than that in Case 2.
The reason is provided as follows.
In Case 1, the interference only comes from the D2D helpers that are farther away than $\rm{d}_{0}$;
however, in Case 2, the interference includes what is generated by the helpers closer than the serving node $\rm{d}_0$.
Therefore, the interference from Case 2 is severer,
leading to a lower successful transmission probability.
The same finding can be obtained
when the user is served by the nearest potential serving SBS.
\end{remark}

According to Theorem 1,
the expression for $P_{f,l}\left(\mathrm{SIR}_{\rm{d}}\geq\theta\right)$ can be written as
\begin{align}
P_{f,l}\left(\mathrm{SIR}_{\rm{d}}\geq\theta\right)&=\frac{p_{f,l}^{\rm{d}}}{1-\exp\left(-\lambda_{\rm{d}}p_{f,l}^{\rm{d}}\pi r_{\rm{c}}^{2}\right)}\nonumber\\
&\left(p_{f,l}^{\rm{d}}\left(q_{f,l}^{\rm{d}}\left(\theta^{-\frac{2}{\alpha_{\rm{d}}}}\right)-q_{f,l}^{\rm{d}}(0)\right)+q_{f,l}^{\rm{d}}(0)\right).\label{STP_111}
\end{align}

Similarly, there are two cases in regards of the successful transmission
when the typical user is served by the nearest potential serving SBS.
The two cases are shown as follows:
\begin{itemize}
\item Case 3: The geographically nearest SBS caches the requested video layer,
and becomes the serving SBS. The probability is $p_{f,l}^{\rm{s}}$.
\item Case 4: The geographically nearest SBS does not cache the requested super layer.
The serving node is one of the farther SBSs,
with the probability $\left(1-p_{f,l}^{\rm{s}}\right)$.
\end{itemize}

By taking the two cases into consideration, $P_{f,l}\left(\mathrm{SIR}_{\rm{s}}\geq\theta\right)$ can be written as
\begin{align}
&P_{f,l}\left(\mathrm{SIR}_{\rm{s}}\geq\theta\right)\nonumber\\
&=p_{f,l}^{\rm{s}}P_{f,l}^{3}\left(\mathrm{SIR}_{\rm{s}}\geq\theta\right)+\left(1-p_{f,l}^{\rm{s}}\right)P_{f,l}^{4}\left(\mathrm{SIR}_{\rm{s}}\geq\theta\right)
\end{align}
\noindent where $P_{f,l}^{3}\left(\mathrm{SIR}_{\rm{s}}\geq\theta\right)$ and $P_{f,l}^{4}\left(\mathrm{SIR}_{\rm{s}}\geq\theta\right)$
are the successful transmission probabilities in Case 3 and Case 4, respectively.
The expressions for the probabilities can be found in the following theorem.
\begin{theo}
If the typical user is served by the nearest potential serving SBS,
the successful transmission probabilities of Case 3 and Case 4
are given by
\begin{align}
P_{f,l}^{3}\left(\mathrm{SIR}_{\rm{s}}\geq\theta\right)=\begin{cases}
\frac{p_{f,l}^{\rm{s}}q_{f,l}^{s}\big(\theta^{-\frac{2}{\alpha_{\rm{s}}}}\big)}{1-\exp\left(-\lambda_{\rm{s}}p_{f,l}^{\rm{s}}\pi r_{\rm{d}}^{2}\right)}, & if \; \text{0<}p_{f,l}^{\rm{s}}\leq1;\\
0, & if \; p_{f,l}^{\rm{s}}=0,
\end{cases}
\end{align}
\begin{align}
P_{f,l}^{4}\left(\mathrm{SIR}_{\rm{s}}\geq\theta\right)=\begin{cases}
\frac{p_{f,l}^{\rm{s}}q_{f,l}^{\rm{s}}(0)}{1-\exp\left(-\lambda_{\rm{s}}p_{f,l}^{\rm{s}}\pi r_{\rm{d}}^{2}\right)}, & if \; \text{0<}p_{f,l}^{\rm{s}}\leq1;\\
0, & if \; p_{f,l}^{\rm{s}}=0,
\end{cases}
\end{align}
\noindent where
\begin{align}
&q_{f,l}^{\rm{s}}(x)=\frac{1-\exp\left(-\lambda_{\rm{s}}\pi r_{\rm{d}}^{2}\left(p_{f,l}^{\rm{s}}+\theta^{\frac{2}{\alpha_{\rm{s}}}}G_{\alpha_{\rm{s}}}(x)\right)\right)}{p_{f,l}^{\rm{s}}+\theta^{\frac{2}{\alpha_{\rm{s}}}}G_{\alpha_{\rm{s}}}(x)}.
\end{align}
\end{theo}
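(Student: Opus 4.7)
The plan is to mirror the argument used for Theorem 1 (Appendix A), exploiting the fact that the SBS tier is a homogeneous PPP and that the geometry (a disc of radius $r_{\rm{d}}$, path-loss exponent $\alpha_{\rm{s}}$, Rayleigh fading) is directly analogous to the D2D tier, with the symbol replacements $(\lambda_{\rm{d}}, r_{\rm{c}}, \alpha_{\rm{d}}) \mapsto (\lambda_{\rm{s}}, r_{\rm{d}}, \alpha_{\rm{s}})$ and $p_{f,l}^{\rm{d}} \mapsto p_{f,l}^{\rm{s}}$. Four ingredients suffice: (i) exploit Rayleigh fading to convert the conditional success probability into a Laplace transform of the aggregate interference; (ii) evaluate that Laplace transform via the probability generating functional (PGFL) of the PPP; (iii) integrate against the appropriate serving-distance density truncated to $[0, r_{\rm{d}}]$; and (iv) normalise by the association probability $a_{f,l}^{\rm{s}} = 1 - \exp(-\lambda_{\rm{s}} p_{f,l}^{\rm{s}} \pi r_{\rm{d}}^{2})$ to isolate the event that the thinned caching PPP is non-empty in the collaborative cell.

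For Case 3, I would condition on the event that the geographically nearest SBS lies at distance $r_0^{\rm{s}} = r \in (0, r_{\rm{d}}]$ and caches the requested super layer. Because $|h_0^{\rm{s}}|^{2}$ is exponentially distributed with unit mean, the conditional success probability equals the Laplace transform of the interference evaluated at $s = \theta r^{\alpha_{\rm{s}}}$. The interferer set $\Phi_{\rm{s}} \setminus s_0$ lies entirely at distances greater than $r$ (since $s_0$ is the geographically nearest SBS), so the PGFL together with the substitution $u = (r_k^{\rm{s}}/r)^{\alpha_{\rm{s}}} \theta^{-1}$ produces an exponential of $-\lambda_{\rm{s}} \pi r^{2} \theta^{2/\alpha_{\rm{s}}} G_{\alpha_{\rm{s}}}(\theta^{-2/\alpha_{\rm{s}}})$. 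Weighting by the thinning factor $p_{f,l}^{\rm{s}}$, integrating against the PPP nearest-neighbor density $2\pi \lambda_{\rm{s}} r \exp(-\lambda_{\rm{s}} \pi r^{2})$ on $[0, r_{\rm{d}}]$, and dividing by $a_{f,l}^{\rm{s}}$, the radial integral collapses exactly to $p_{f,l}^{\rm{s}} q_{f,l}^{\rm{s}}(\theta^{-2/\alpha_{\rm{s}}}) / a_{f,l}^{\rm{s}}$.

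Case 4 uses the same pipeline with two modifications. First, the serving node is the nearest caching SBS rather than the nearest SBS overall, so the relevant distance density acquires the thinning factor $p_{f,l}^{\rm{s}}$ in both the coefficient and the exponent. Second, because the nearest non-caching SBS may lie closer than the serving SBS, the interferer region now extends to distances arbitrarily close to the user; the lower limit of the interference integral therefore moves from $\theta^{-2/\alpha_{\rm{s}}}$ down to $0$, replacing $G_{\alpha_{\rm{s}}}(\theta^{-2/\alpha_{\rm{s}}})$ by $G_{\alpha_{\rm{s}}}(0)$. This is the only qualitative change between $P_{f,l}^{3}$ and $P_{f,l}^{4}$ in the final expressions, and it explains the presence of $q_{f,l}^{\rm{s}}(0)$ in the Case 4 formula. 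The boundary case $p_{f,l}^{\rm{s}} = 0$ is immediate since no caching SBS is available to serve the user.

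The main obstacle, as in Theorem 1, is keeping the conditioning bookkeeping clean rather than any single hard calculation: I must be careful to truncate the serving-distance density to $[0, r_{\rm{d}}]$, to identify the interferer region correctly under each case, and to divide by $a_{f,l}^{\rm{s}}$ only after aggregating the success event with the non-emptiness event of the thinned caching PPP. Once the substitution that rescales $r_k^{\rm{s}}$ by $r$ is in place and the identity $G_{\alpha_{\rm{s}}}(b) = \int_{b}^{\infty} (1 + u^{\alpha_{\rm{s}}/2})^{-1} \, \mathrm{d}u$ is invoked, the remaining manipulations reduce to the same closed-form evaluation already performed in Appendix A.
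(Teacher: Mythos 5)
Your overall strategy --- rerun Appendix A verbatim under the substitutions $(\lambda_{\rm{d}},r_{\rm{c}},\alpha_{\rm{d}},p_{f,l}^{\rm{d}})\mapsto(\lambda_{\rm{s}},r_{\rm{d}},\alpha_{\rm{s}},p_{f,l}^{\rm{s}})$ --- is exactly what the paper does (its ``proof'' of this theorem is literally a pointer to Appendix A), and your treatment of Case 4, including moving the lower limit of the interference integral from $\theta^{-2/\alpha_{\rm{s}}}$ down to $0$ so that $G_{\alpha_{\rm{s}}}(\theta^{-2/\alpha_{\rm{s}}})$ becomes $G_{\alpha_{\rm{s}}}(0)$, matches the paper's computation of the product of the two Laplace transforms. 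However, your Case 3 bookkeeping does not reproduce the stated formula. You propose to integrate the conditional success probability against the \emph{un-thinned} nearest-neighbor density $2\pi\lambda_{\rm{s}}r\exp(-\lambda_{\rm{s}}\pi r^{2})$ on $[0,r_{\rm{d}}]$, weight by $p_{f,l}^{\rm{s}}$, and divide by $a_{f,l}^{\rm{s}}$. Carrying that out gives $\frac{p_{f,l}^{\rm{s}}}{a_{f,l}^{\rm{s}}}\cdot\frac{1-\exp\left(-\lambda_{\rm{s}}\pi r_{\rm{d}}^{2}\left(1+\theta^{2/\alpha_{\rm{s}}}G_{\alpha_{\rm{s}}}(\theta^{-2/\alpha_{\rm{s}}})\right)\right)}{1+\theta^{2/\alpha_{\rm{s}}}G_{\alpha_{\rm{s}}}(\theta^{-2/\alpha_{\rm{s}}})}$, i.e., a ``$1$'' where the theorem's $q_{f,l}^{\rm{s}}$ has ``$p_{f,l}^{\rm{s}}$''; the two agree only when $p_{f,l}^{\rm{s}}=1$. (Your normalization is also internally inconsistent: the un-thinned density integrates to $1-\exp(-\lambda_{\rm{s}}\pi r_{\rm{d}}^{2})$ over the disc, not to $a_{f,l}^{\rm{s}}$.)

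To land on the stated result you must use, in \emph{both} cases, the serving-distance density of the nearest point of the thinned (caching) process conditioned on non-emptiness in the disc, $f_{\rm{s}}(r)=\frac{2\pi\lambda_{\rm{s}}p_{f,l}^{\rm{s}}r\exp(-\lambda_{\rm{s}}p_{f,l}^{\rm{s}}\pi r^{2})}{1-\exp(-\lambda_{\rm{s}}p_{f,l}^{\rm{s}}\pi r_{\rm{d}}^{2})}$, exactly as the paper's $f_{\rm{d}}(r)$ in Appendix A, with no additional weighting by $p_{f,l}^{\rm{s}}$ and no further division by $a_{f,l}^{\rm{s}}$. The $p_{f,l}^{\rm{s}}$ in the numerators of $P_{f,l}^{3}$ and $P_{f,l}^{4}$ emerges automatically from the ratio of the coefficient $2\pi\lambda_{\rm{s}}p_{f,l}^{\rm{s}}$ to the combined exponent rate $\lambda_{\rm{s}}\pi\left(p_{f,l}^{\rm{s}}+\theta^{2/\alpha_{\rm{s}}}G_{\alpha_{\rm{s}}}(\cdot)\right)$ when you integrate in $r^{2}$, and the $a_{f,l}^{\rm{s}}$ in the denominator is already built into $f_{\rm{s}}(r)$. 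Applying an extra factor $p_{f,l}^{\rm{s}}$ inside $P_{f,l}^{3}$ also risks double counting, since that weight is applied externally in the decomposition $P_{f,l}\left(\mathrm{SIR}_{\rm{s}}\geq\theta\right)=p_{f,l}^{\rm{s}}P_{f,l}^{3}+\left(1-p_{f,l}^{\rm{s}}\right)P_{f,l}^{4}$. Once the density is corrected, the rest of your argument --- Rayleigh fading converting the success event into a Laplace transform, the PGFL over interferers beyond $r$ for Case 3 and over all interferers for Case 4, and the $p_{f,l}^{\rm{s}}=0$ boundary convention --- coincides with the paper's.
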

\begin{IEEEproof}
The proof can be obtained by referring to Appendix A. We omit detailed steps due to limited space.
\end{IEEEproof}
Based on Theorem 2, the expression for $P_{f,l}(\mathrm{SIR}_{\rm{s}}\geq\theta)$ can be written as
\begin{align}
P_{f,l}\left(\mathrm{SIR}_{\rm{s}}\geq\theta\right)&=\frac{p_{f,l}^{\rm{s}}}{1-\exp\left(-\lambda_{\rm{s}}p_{f,l}^{\rm{s}}\pi r_{\rm{d}}^{2}\right)}\nonumber\\
&\left(p_{f,l}^{\rm{s}}\left(q_{f,l}^{\rm{s}}\left(\theta^{-\frac{2}{\alpha_{\rm{s}}}}\right)-q_{f,l}^{\rm{s}}(0)\right)+q_{f,l}^{\rm{s}}(0)\right).\label{STP_222}
\end{align}

When the nearest MBS is assigned to deliver the requested super layers,
the successful transmission probability is shown in the following theorem.
\begin{theo}
The successful transmission probability for the nearest MBS serving the typical user is given by
\begin{gather}
P\left(\mathrm{SIR}_{\rm{m}}\geq\theta\right)=\left[1+\theta^{\frac{2}{\alpha_{\rm{m}}}}G_{\alpha_{\rm{m}}}\left(\theta^{-\frac{2}{\alpha_{\rm{m}}}}\right)\right]^{-1}.\label{STP_333}
\end{gather}
\end{theo}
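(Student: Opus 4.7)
The plan is to mirror the standard nearest-node stochastic-geometry recipe that already underlies Theorems 1 and 2, but simplified by the fact that \emph{every} MBS is a candidate serving node (no caching thinning, no bounded collaborative region). First I would condition on the distance $r_{0}^{\rm{m}} = r$ from the user to the nearest MBS. Because $\Phi_{\rm{m}}$ is a homogeneous PPP of density $\lambda_{\rm{m}}$ in $\mathbb{R}^2$, the null-probability formula gives the nearest-distance density $f(r) = 2\pi\lambda_{\rm{m}} r \exp(-\pi\lambda_{\rm{m}} r^2)$, and, conditional on this distance, the remaining MBSs form a PPP on the exterior of the disk of radius $r$ by Slivnyak's theorem.

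Next I would use the exponentiality of $|h_0^{\rm{m}}|^2 \sim \mathcal{CN}(0,1)$ to rewrite the conditional coverage as a Laplace transform. Writing the aggregate interference as $I = \sum_{k \in \Phi_{\rm{m}}\setminus {\rm{m}}_0} |h_k^{\rm{m}}|^2 (r_k^{\rm{m}})^{-\alpha_{\rm{m}}}$, one has
\begin{align}
P\!\left(\mathrm{SIR}_{\rm{m}}\!\ge\!\theta \,\big|\, r\right) = \mathbb{E}_{I}\!\left[\exp\!\left(-\theta r^{\alpha_{\rm{m}}} I\right)\right] = \mathcal{L}_{I}(\theta r^{\alpha_{\rm{m}}}).
\end{align}
Applying the probability generating functional of the PPP and then averaging over $|h_k^{\rm{m}}|^2 \sim \mathrm{Exp}(1)$ yields
\begin{align}
\mathcal{L}_{I}(s) = \exp\!\left(-2\pi\lambda_{\rm{m}} \int_{r}^{\infty} \frac{s u^{-\alpha_{\rm{m}}}}{1+s u^{-\alpha_{\rm{m}}}} u \, \mathrm{d}u\right),
\end{align}
which is the same integral that appeared in Appendix A for the D2D/SBS tiers, just without the density-thinning factor and with the lower limit being the random distance $r$ itself.

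Plugging $s = \theta r^{\alpha_{\rm{m}}}$ and performing the successive substitutions $v = u/r$ followed by $w = v^{2}$ and $y = w\,\theta^{-2/\alpha_{\rm{m}}}$ collapses the inner integral into $\tfrac{r^{2}}{2}\theta^{2/\alpha_{\rm{m}}} G_{\alpha_{\rm{m}}}(\theta^{-2/\alpha_{\rm{m}}})$, so that $\mathcal{L}_{I}(\theta r^{\alpha_{\rm{m}}}) = \exp(-\pi\lambda_{\rm{m}} r^{2} \theta^{2/\alpha_{\rm{m}}} G_{\alpha_{\rm{m}}}(\theta^{-2/\alpha_{\rm{m}}}))$. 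Finally I would deconditional on $r$:
\begin{align}
P\!\left(\mathrm{SIR}_{\rm{m}}\!\ge\!\theta\right) = \int_{0}^{\infty}\! 2\pi\lambda_{\rm{m}} r \exp\!\Big(-\pi\lambda_{\rm{m}} r^{2}\big[1+\theta^{\tfrac{2}{\alpha_{\rm{m}}}}G_{\alpha_{\rm{m}}}(\theta^{-\tfrac{2}{\alpha_{\rm{m}}}})\big]\Big) \mathrm{d}r,
\end{align}
which is an elementary Gaussian-type integral evaluating to the stated reciprocal.

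I do not expect a real obstacle; the only delicate step is bookkeeping the change of variables so that the integration limit of $G_{\alpha_{\rm{m}}}$ comes out as $\theta^{-2/\alpha_{\rm{m}}}$ rather than some equivalent but unrecognizable form, and noticing that the factor $\lambda_{\rm{m}}$ cancels in the final ratio so that the answer is independent of MBS density, as is standard for nearest-BS coverage in interference-limited PPP networks. The whole argument is a direct specialization of the computation in Appendix A (with $p_{f,l}^{\rm{d}}=1$ and $r_{\rm{c}}\to\infty$, informally), which is presumably why the authors felt no need to reproduce it in full.
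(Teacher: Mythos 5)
Your proposal matches the paper's own proof in Appendix B step for step: condition on the nearest-MBS distance with density $2\pi\lambda_{\rm{m}} r \exp(-\pi\lambda_{\rm{m}} r^{2})$, express the conditional coverage as the Laplace transform of the interference via the PGFL of the PPP (giving $\exp(-\pi\lambda_{\rm{m}} r^{2}\theta^{2/\alpha_{\rm{m}}}G_{\alpha_{\rm{m}}}(\theta^{-2/\alpha_{\rm{m}}}))$), and deconditional to obtain the stated reciprocal. The argument is correct and complete, including the observation that $\lambda_{\rm{m}}$ cancels.
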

\begin{IEEEproof}
See Appendix B.
\end{IEEEproof}

\begin{remark}
Notice that, when $a=4$, we can obtain that $G_{a}(b)=\pi/2-\arctan(b)={\rm{arccot}}(b).$
Thus, if $\alpha_{\rm{d}}=\alpha_{\rm{s}}=\alpha_{\rm{m}}=4$,
the closed-form expressions for (\ref{eq:Delay}), (\ref{STP_111}), (\ref{STP_222}) and (\ref{STP_333}) can be attained.
Otherwise, they cannot be regarded as closed-form ones.
\end{remark}

\begin{remark}
From (\ref{STP_111}) and (\ref{STP_222}), we observe that the PPP densities $\lambda_{\rm{d}}$ and $\lambda_{\rm{s}}$
have a strong impact on the
successful transmission probabilities $P_{f,l}\left(\mathrm{SIR}_{\rm{d}}\geq\theta\right)$ and $P_{f,l}\left(\mathrm{SIR}_{\rm{s}}\geq\theta\right)$, respectively.
From (\ref{STP_333}), it is seen that $P\left(\mathrm{SIR}_{\rm{m}}\geq\theta\right)$
is independent of the PPP density $\lambda_{\rm{m}}$.
We can infer that,
the PPP density can affect the average number of serving nodes, and, in turn, the successful transmission probabilities $P_{f,l}\left(\mathrm{SIR}_{\rm{d}}\geq\theta\right)$ and $P_{f,l}\left(\mathrm{SIR}_{\rm{s}}\geq\theta\right)$.
The probabilities also depend on the path-loss exponent and caching probability.
\end{remark}

As a result, the expressions for successful transmission probabilities are obtained
in (\ref{STP_111}), (\ref{STP_222}), and (\ref{STP_333}), and so are the partial service delays.
By substituting (\ref{pp1})-(\ref{pp3}) into (\ref{eq:Delay}),
the expression for overall service delay is developed.
\section{The Problem Formulation and the Proposed Algorithm}
In this section, we minimize the service delay
subject to the finite cache sizes of D2D helpers and SBSs.
Then, we exploit the structure of the optimal solutions,
and develop the improved standard gradient projection method,
from which the sub-optimal caching probabilities can be achieved.
\subsection{Problem Formulation}
Given the expression for the overall service delay (\ref{eq:Delay}),
the delay minimization problem is formulated as
\begin{subequations}\label{max_1}
\begin{align}
\underset{\bf{p^{\rm{d}}}, \bf{p^{\rm{s}}}}{\mathrm{min}}\quad\: &  D\left(\bf{p^{\rm{d}}},\bf{p^{\rm{s}}}\right)\label{eq:objective_1}\\
\mathrm{\mathrm{s.t.}}\:\quad\: & \sum_{f=1}^{F}\sum_{l=1}^{L}p_{f,l}^{\rm{d}}c_{f,l}\leq M_{\rm{d}},\label{eq:cons_1}\\
&\sum_{f=1}^{F}\sum_{l=1}^{L}p_{f,l}^{\rm{s}}c_{f,l}\leq M_{\rm{s}},\label{eq:cons_2}
\end{align}
\begin{align}
 & 0\leq p_{f,l}^{\rm{d}}\leq1, \forall f,\forall l, \label{eq:cons_3}\\
 & 0\leq p_{f,l}^{\rm{s}}\leq1, \forall f,\forall l.\label{eq:cons_4}
\end{align}
\end{subequations}
\noindent Constraints (\ref{eq:cons_1}) and (\ref{eq:cons_2})
indicate the cache size restrictions,
where $M_{\rm{d}}$ and $M_{\rm{s}}$ are the cache sizes allocated to each D2D helper and SBS, respectively.
Furthermore, inequalities (\ref{eq:cons_3}) and (\ref{eq:cons_4}) specify the feasible solution regions
of the caching probabilities $p_{f,l}^{\rm{d}}$ and $p_{f,l}^{\rm{s}}$, respectively.
\subsection{Proposed Algorithm}
To solve the problem (\ref{max_1}),
we show the structure of the optimal solutions in the following theorem.
\begin{theo}
The random caching probabilities $p_{f,l}^{\rm{d}}$ and $p_{f,l}^{\rm{s}}$ are maximized,
if and only if the cache sizes of each D2D helper and SBS are fully utilized;
or in other words, constraints (\ref{eq:cons_1}) and (\ref{eq:cons_2}) take equality.
\end{theo}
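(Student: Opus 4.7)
The plan is to argue the result by contradiction and a perturbation, once the monotonicity of $D(\bf{p^{\rm{d}}},\bf{p^{\rm{s}}})$ in each individual caching probability has been established. Suppose the optimum $(\bf{p^{\rm{d}*}},\bf{p^{\rm{s}*}})$ has strict slack in, say, (\ref{eq:cons_1}). Since caching more never hurts (this is the content of the monotonicity claim), I would find an index $(f_{0},l_{0})$ with $p_{f_0,l_0}^{\rm{d}*}<1$ and bump it up by $\epsilon=\min\{1-p_{f_0,l_0}^{\rm{d}*},\,(M_{\rm{d}}-\sum_{f,l}p_{f,l}^{\rm{d}*}c_{f,l})/c_{f_0,l_0}\}>0$; the new vector stays feasible but strictly reduces $D$, contradicting optimality. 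The symmetric argument covers (\ref{eq:cons_2}).

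The technical core is the monotonicity. First I would note that in (\ref{eq:Delay}) only the summand $p_{f,l}(D_{f,l}^{\rm{d}}+D_{f,l}^{\rm{s}}+D_{f,l}^{\rm{m}})$ for the given layer depends on $p_{f,l}^{\rm{d}}$. Using the shorthand $\pi_{f,l}^{\rm{d}}:=a_{f,l}^{\rm{d}}P_{f,l}(\mathrm{SIR}_{\rm{d}}\geq\theta)$ and $\pi_{f,l}^{\rm{s}}:=a_{f,l}^{\rm{s}}P_{f,l}(\mathrm{SIR}_{\rm{s}}\geq\theta)$, the association probability in the denominator of Theorem~1 cancels, giving
\begin{equation*}
\pi_{f,l}^{\rm{d}}=(p_{f,l}^{\rm{d}})^{2}\bigl[q_{f,l}^{\rm{d}}(\theta^{-2/\alpha_{\rm{d}}})-q_{f,l}^{\rm{d}}(0)\bigr]+p_{f,l}^{\rm{d}}q_{f,l}^{\rm{d}}(0).
\end{equation*}
Then the layer's contribution collapses to $\pi_{f,l}^{\rm{d}}\tau_{\rm{d}}+(1-\pi_{f,l}^{\rm{d}})E_{f,l}^{\rm{s}}$ where $\tau_{\rm{d}}=c_{f,l}/[W_{\rm{d}}\log_{2}(1+\theta)]$ and $E_{f,l}^{\rm{s}}=\pi_{f,l}^{\rm{s}}\tau_{\rm{s}}+(1-\pi_{f,l}^{\rm{s}})\tau_{\rm{m}}$ with $\tau_{\rm{m}}$ including the backhaul component $c_{f,l}/R_{\rm{bh}}$. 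Differentiating in $\pi_{f,l}^{\rm{d}}$ yields the coefficient $\tau_{\rm{d}}-E_{f,l}^{\rm{s}}$, which is strictly negative because the D2D transmission avoids the backhaul delay contained in $\tau_{\rm{m}}$ (and, symmetrically, because $\tau_{\rm{s}}<\tau_{\rm{m}}$ makes $E_{f,l}^{\rm{s}}$ strictly decreasing in $\pi_{f,l}^{\rm{s}}$). It remains to verify $\partial\pi_{f,l}^{\rm{d}}/\partial p_{f,l}^{\rm{d}}>0$, which I would establish either by direct differentiation using that $f(u):=(1-e^{-\mu u})/u$ is monotonically decreasing in $u$ (so $q_{f,l}^{\rm{d}}(\theta^{-2/\alpha_{\rm{d}}})>q_{f,l}^{\rm{d}}(0)$ since $G_{\alpha_{\rm{d}}}$ is decreasing), or by a cleaner coupling argument: thinning with a larger probability $p$ produces a stochastically larger set of potential helpers, so the nearest one is (weakly) closer while the total interference field is unchanged, hence $\pi_{f,l}^{\rm{d}}$ is strictly increasing in $p_{f,l}^{\rm{d}}$.

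The hardest piece is verifying strict monotonicity of $\pi_{f,l}^{\rm{d}}$ rigorously: the direct calculation produces $(p_{f,l}^{\rm{d}})^{2}Q_{1}'+p_{f,l}^{\rm{d}}(1-p_{f,l}^{\rm{d}})Q_{0}'$ plus positive terms, with $Q_{0},Q_{1}$ both decreasing in $p_{f,l}^{\rm{d}}$, so the signs have to be reconciled carefully. I would lean on the coupling interpretation to avoid that pitfall, then quote the resulting monotonicity of $\pi_{f,l}^{\rm{d}}$ to complete the derivative sign for the total summand. The parallel statement for $p_{f,l}^{\rm{s}}$ is structurally identical: one shows that the SBS success probability $\pi_{f,l}^{\rm{s}}$ is strictly increasing in $p_{f,l}^{\rm{s}}$ by the same coupling, and that the coefficient $\tau_{\rm{s}}-\tau_{\rm{m}}<0$ makes $E_{f,l}^{\rm{s}}$, and hence $D$, strictly decreasing in $p_{f,l}^{\rm{s}}$.

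Finally, I would briefly address the edge case: if $\sum_{f,l}c_{f,l}\leq M_{\rm{d}}$ then even setting all $p_{f,l}^{\rm{d}}=1$ leaves slack and the claim degenerates trivially, so the theorem is understood under the practical regime $M_{\rm{d}}<\sum_{f,l}c_{f,l}$ (and analogously for $M_{\rm{s}}$); this ensures that an index with $p_{f_0,l_0}^{\rm{d}*}<1$ always exists when slack occurs. With that caveat, the contradiction argument closes and the structural claim that the optimum tightens (\ref{eq:cons_1}) and (\ref{eq:cons_2}) follows.
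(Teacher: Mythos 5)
Your proposal reaches the same conclusion by what is recognizably the same underlying idea as the paper --- caching more raises the probability of local service and thus avoids the slow backhaul, so any slack in (\ref{eq:cons_1}) or (\ref{eq:cons_2}) can be profitably absorbed --- but you formalize it where the paper does not. The paper's own proof is a three-step heuristic: it asserts $D_{f,l}^{\rm{m}}\gg D_{f,l}^{\rm{s}}$ and $D_{f,l}^{\rm{m}}\gg D_{f,l}^{\rm{d}}$ as a practical fact, states without calculation that the content hit rate $p^{\rm{hit}}_{f,l}=1-(1-a_{f,l}^{\rm{d}}P_{f,l}(\mathrm{SIR}_{\rm{d}}\geq\theta))(1-a_{f,l}^{\rm{s}}P_{f,l}(\mathrm{SIR}_{\rm{s}}\geq\theta))$ is increasing in $p_{f,l}^{\rm{d}}$ and $p_{f,l}^{\rm{s}}$, and concludes that a higher hit rate means lower delay. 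Your version makes both of the paper's leaps explicit: the affine decomposition $\pi_{f,l}^{\rm{d}}\tau_{\rm{d}}+(1-\pi_{f,l}^{\rm{d}})E_{f,l}^{\rm{s}}$ with coefficient $\tau_{\rm{d}}-E_{f,l}^{\rm{s}}$ is exactly the quantitative content of ``higher hit rate implies lower delay,'' and it exposes that the sign of that coefficient is a parameter assumption ($\tau_{\rm{d}}<E_{f,l}^{\rm{s}}$ and $\tau_{\rm{s}}<\tau_{\rm{m}}$, i.e.\ the backhaul dominance the paper invokes verbally) rather than a theorem; and your perturbation-at-the-optimum framing plus the edge case $\sum_{f,l}c_{f,l}\leq M_{\rm{d}}$ supply the ``only if'' direction and the degenerate regime that the paper silently skips. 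The one piece you correctly flag as hard --- strict monotonicity of $\pi_{f,l}^{\rm{d}}=(p_{f,l}^{\rm{d}})^{2}(q_{f,l}^{\rm{d}}(\theta^{-2/\alpha_{\rm{d}}})-q_{f,l}^{\rm{d}}(0))+p_{f,l}^{\rm{d}}q_{f,l}^{\rm{d}}(0)$ in $p_{f,l}^{\rm{d}}$, given that $q_{f,l}^{\rm{d}}$ itself decreases in $p_{f,l}^{\rm{d}}$ --- is precisely the step the paper asserts with ``it can be seen that,'' so your treatment is no weaker there and your coupling sketch is more than the paper offers (though note the coupling is not pathwise, since the independent fading of the new, closer serving helper prevents a sample-path domination of the SIR; it would have to be argued in distribution). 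In short: same route at heart, but yours buys a precise statement of which modeling assumptions the theorem actually rests on, at the cost of more bookkeeping.
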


\begin{proof}
The overall service delay includes three parts, i.e., the partial delays caused by D2D helpers, SBSs and backhaul deliveries.
In practice, the backhaul retrieval is much more time-consuming,
since there are limited backhaul resources shared by many users.
In other words, $D_{f,l}^{\rm{m}}\gg D_{f,l}^{\rm{s}}$ and $D_{f,l}^{\rm{m}}\gg D_{f,l}^{\rm{d}}$.
Note that the values of $D_{f,l}^{\rm{s}}$ and $D_{f,l}^{\rm{d}}$ are comparable.
To reduce the service delay, it is an effective manner to improve the content hit rate \cite{Wu2018Energy},
which is the probability that requested contents can be satisfied by local caches.
With the improved content hit rate, the user is more likely to acquire the requested super layers locally.
Thus, the backhaul deliveries can be avoided and the service delay can be shortened.

By using the proposed caching and transmission schemes,
the content hit rate of super layer $l$ from the $f$-th file is given by
\begin{align}
&p^{\rm{hit}}_{f,l}\nonumber\\
&=1-\left(1-a_{f,l}^{\rm{d}}P_{f,l}\left(\mathrm{SIR}_{\rm{d}}\geq\theta\right)\right)\left(1-a_{f,l}^{\rm{s}}P_{f,l}\left(\mathrm{SIR}_{\rm{s}}\geq\theta\right)\right). \nonumber
\end{align}
It can be seen that $p^{\rm{hit}}_{f,l}$ is a monotonically increasing function
of $p_{f,l}^{\rm{d}}$ and $p_{f,l}^{\rm{s}}$.
With larger caching probabilities,
the content hit rate can be improved, and the service delay can be reduced to a lower level.
Therefore, the minimum service delay can be attained when the cache sizes of D2D helpers and SBSs are fully utilized.
\end{proof}

According to Theorem 4,
the original optimization problem (\ref{max_1}) can be reformulated as
\begin{subequations}\label{max_2}
\begin{align}
\underset{\bf{p^{\rm{d}}}, \bf{p^{\rm{s}}}}{\mathrm{min}}\quad\: &  D\left(\bf{p^{\rm{d}}},\bf{p^{\rm{s}}}\right)\label{eq:objective_2}\\
\mathrm{\mathrm{s.t.}}\:\quad\: & \sum_{f=1}^{F}\sum_{l=1}^{L}p_{f,l}^{\rm{d}}c_{f,l}=M_{\rm{d}},\label{eq:cons_2_1}\\
&\sum_{f=1}^{F}\sum_{l=1}^{L}p_{f,l}^{\rm{s}}c_{f,l}=M_{\rm{s}},\label{eq:cons_2_2}\\
&(\ref{eq:cons_3}), (\ref{eq:cons_4})\label{eq:cons_2_3}.
\end{align}
\end{subequations}
\begin{algorithm}[t]
\begin{enumerate}
\item Initialization: Set $t=1$, and $\epsilon(1)=1$. Input the accuracy threshold
$\Delta$, and find $p_{f,l}^{\rm{d}}$ and $p_{f,l}^{\rm{s}}$ feasible
to (\ref{eq:cons_2_1})-(\ref{eq:cons_2_3}).
\item Set $\delta\ll\Delta$, and set the maximum number of iterations as $T$.
\item while ($t\leq T$ and $\delta\leq\Delta$)
\item $\quad$For all $f\in\{1,...,F\}$ and $l\in\{1,...,L\}$, calculate
\par\setlength\parindent{1em} $\frac{\partial  D\left(\bf{p^{\rm{d}}},\bf{p^{\rm{s}}}\right)}{\partial p_{f,l}^{\rm{d}}}$, and then obtain
\begin{align}
&\hat{p}_{f,l}^{\rm{d}}(t+1)=p_{f,l}^{\rm{d}}(t)\nonumber\\
&-\epsilon(t)\frac{\partial  D\left(\bf{p^{\rm{d}}},\bf{p^{\rm{s}}}\right)}{\partial p_{f,l}^{\rm{d}}}\mid_{p_{f,l}^{\rm{d}}=p_{f,l}^{\rm{d}}(t),p_{f,l}^{\rm{s}}=p_{f,l}^{\rm{s}}(t)}.\label{eq:Alg_1}
\end{align}
\item $\quad$For all $f\in\{1,...,F\}$ and $l\in\{1,...,L\}$, calculate
\begin{equation}
p_{f,l}^{\rm{d}}(t+1)=\mathrm{min}\left\{\left[\hat{p}_{f,l}^{\rm{d}}(t+1)-u_{1}\right]^{+},1\right\},\label{eq:Alg_2}
\end{equation}
\par\setlength\parindent{1em} where $u_{1}$ satisfies
\begin{equation}
\sum_{_{f=1}}^{F}\sum_{_{l=1}}^{L}c_{f,l}\mathrm{min}\left\{\left[\hat{p}_{f,l}^{\rm{d}}(t+1)-u_{1}\right]^{+},1\right\}=M_{\rm{d}}.\label{eq:Alg_3}
\end{equation}
\item $\quad$For all $f\in\{1,...,F\}$ and $l\in\{1,...,L\}$, evaluate
\par\setlength\parindent{1em} $\frac{\partial D\left(\bf{p^{\rm{d}}},\bf{p^{\rm{s}}}\right)}{\partial p_{f,l}^{\rm{s}}}$,
and then obtain
\begin{align}
&\hat{p}_{f,l}^{\rm{s}}(t+1)=p_{f,l}^{\rm{s}}(t)\nonumber\\
&-\epsilon(t)\frac{\partial  D\left(\bf{p^{\rm{d}}},\bf{p^{\rm{s}}}\right)}{\partial p_{f,l}^{\rm{s}}}\mid_{p_{f,l}^{\rm{d}}=p_{f,l}^{\rm{d}}(t),p_{f,l}^{\rm{s}}=p_{f,l}^{\rm{s}}(t)}.\label{eq:Alg_4}
\end{align}
\item $\quad$ For all $f\in\{1,...,F\}$ and $l\in\{1,...,L\}$, evaluate
\begin{equation}
p_{f,l}^{\rm{s}}(t+1)=\mathrm{min}\{[\hat{p}_{f,l}^{\rm{s}}(t+1)-u_{2}],1\},\label{eq:Alg_5}
\end{equation}
\par\setlength\parindent{1em} where $u_{2}$ satisfies
\begin{equation}
\sum_{_{f=1}}^{F}\sum_{_{l=1}}^{L}c_{f,l}\mathrm{min}\left\{\left[\hat{p}_{f,l}^{\rm{s}}(t+1)-u_{2}\right]^{+},1\right\}=M_{\rm{s}}.\label{eq:Alg_6}
\end{equation}
\item $\quad$$\Delta=\left|D\left({\bf{p^{\rm{d}}}},{\bf{p^{\rm{s}}}}\right)^{(t+1)}- D\left({\bf{p^{\rm{d}}}},{\bf{p^{\rm{s}}}}\right)^{(t)}\right|.$
\item $\quad$$t=t+1$ and $\epsilon(t)=\frac{1}{t}$.
\item end
\end{enumerate}
\caption{The standard gradient projection method for solving Problem (\ref{max_2})}
\end{algorithm}
Note that the objective (\ref{eq:objective_2}) is differentiable;
however, it is difficult to tell (\ref{eq:objective_2}) is convex due to its complex form.
Constraints (\ref{eq:cons_2_1}) to (\ref{eq:cons_2_3}) are linear
equalities and inequalities with regard to $p_{f,l}^{\rm{s}}$ and $p_{f,l}^{\rm{d}}$.
They form a convex variable set.
In light of this, the standard gradient projection method can be employed to solve this problem.
The standard gradient projection method is an effective approach to solve the optimization problem
with a differentiable objective function over a convex variable set \cite{Wen2017Random}.
The detailed procedure of solving this problem is summarized in Algorithm 1.

In Algorithm 1, Steps 4) and 6) calculate the partial derivatives of the service delay
with respect to $p_{f,l}^{\rm{d}}$ and $p_{f,l}^{\rm{s}}$\footnote{
These partial derivatives can be obtained by conventional derivative methods,
whose forms are complicated and therefore omitted in this paper.
When running Algorithm 1, the derivatives can be directly calculated by Matlab with ``diff'' function.}.
Given a pre-defined step size $\epsilon(t)$,
$\hat{p}_{f,l}^{\rm{d}}(t+1)$ and $\hat{p}_{f,l}^{\rm{s}}(t+1)$ are updated according to formulas (\ref{eq:Alg_1}) and (\ref{eq:Alg_4}).
They can be regarded as the optimal values of $p_{f,l}^{\rm{d}}$ and $p_{f,l}^{\rm{s}}$
achieved at the $(t+1)$-th iteration without the cache size restrictions
of each D2D helper and SBS.
In order to not exceed the cache sizes,
the values of $\hat{p}_{f,l}^{\rm{d}}(t+1)$ and $\hat{p}_{f,l}^{\rm{s}}(t+1)$
are projected onto the variable sets comprised of the cache size restrictions (\ref{eq:cons_2_1}) and (\ref{eq:cons_2_2}),
as done in (\ref{eq:Alg_2}) and (\ref{eq:Alg_5}).
$[\:\cdot\:]^{+}$ is employed to guarantee non-negativity.
Thus, $\mathrm{min}\left\{\left[\cdot\right]^{+},1\right\}$ can maintain valid feasible value regions of $p_{f,l}^{\rm{d}}$ and $p_{f,l}^{\rm{s}}$,
satisfying (\ref{eq:cons_2_3}).
As the result of adopting Algorithm 1,
sub-optimal solutions for the random caching probabilities
can be achieved while all constraints in (\ref{max_2}) are satisfied.
\begin{figure*}[t]
\centering{}

\subfloat[$P_{f,l}\left(\mathrm{SIR}_{\rm{d}}\geq\theta\right)$ versus $p_{f,l}^{{\rm{d}}}$]{\includegraphics[scale=0.36]{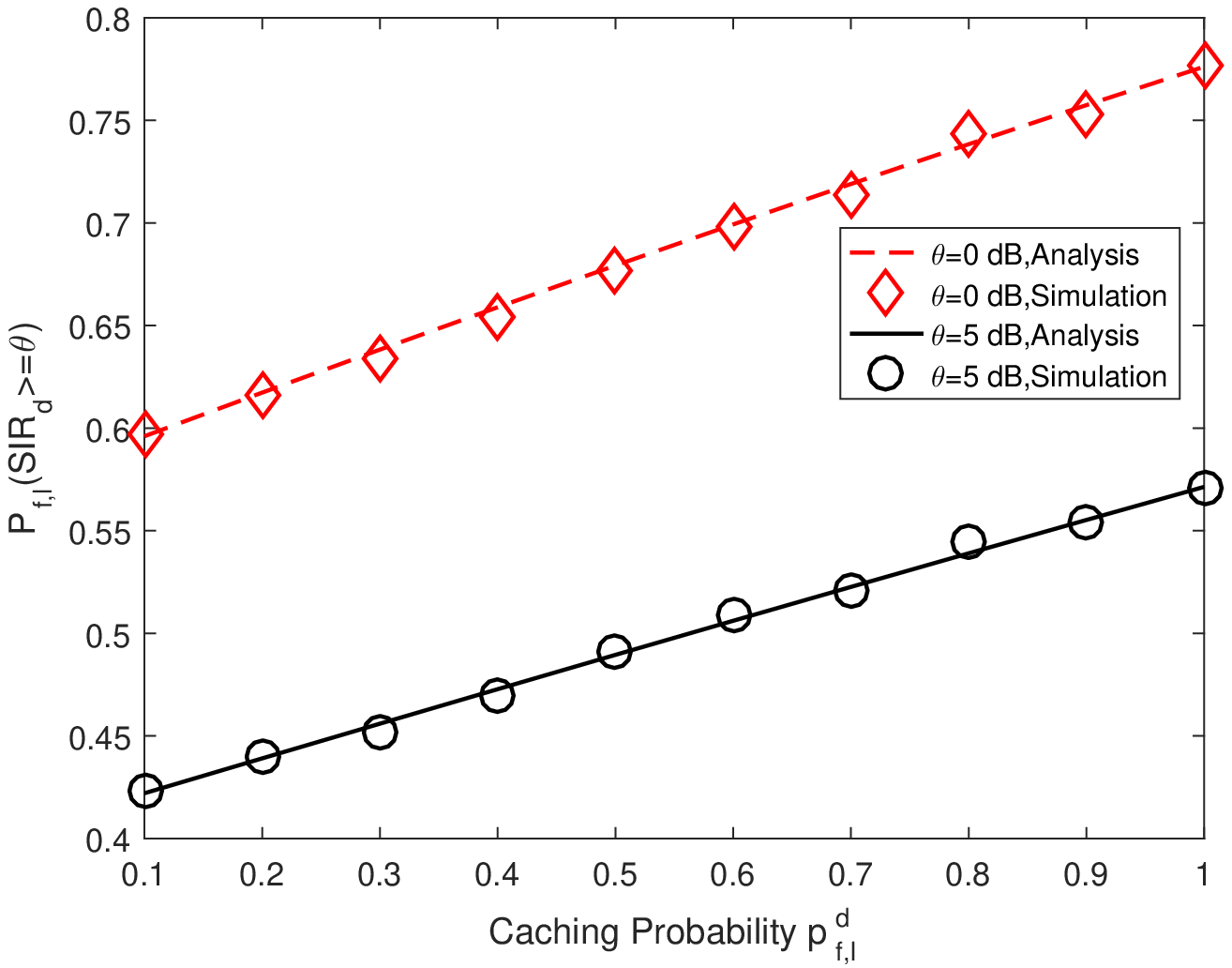}

}\subfloat[$P_{f,l}\left(\mathrm{SIR}_{\rm{s}}\geq\theta\right)$ versus $p_{f,l}^{{\rm{s}}}$]{\includegraphics[scale=0.36]{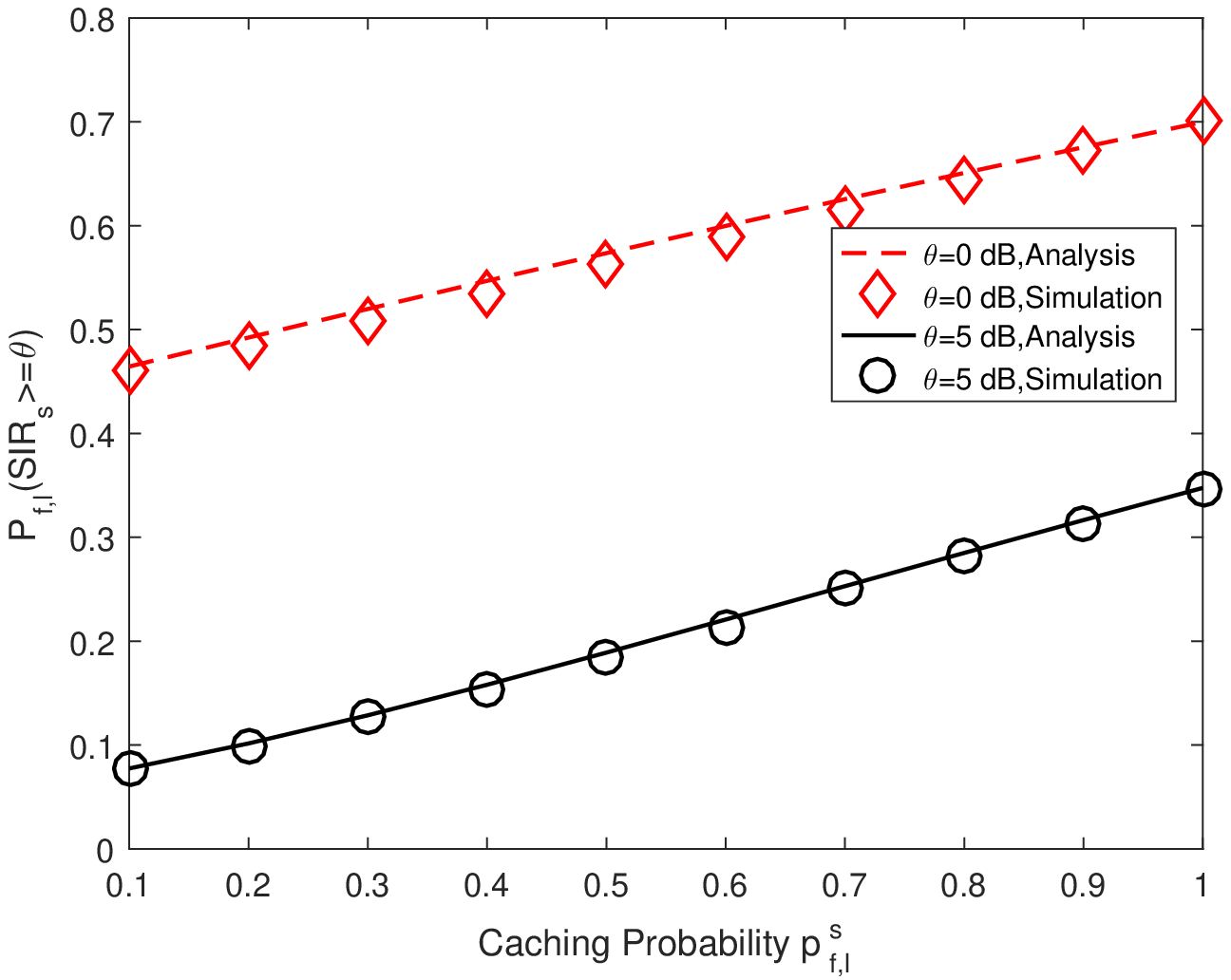}

}\subfloat[$P\left(\mathrm{SIR}_{\rm{m}}\geq\theta\right)$ versus $\theta$]{\includegraphics[scale=0.36]{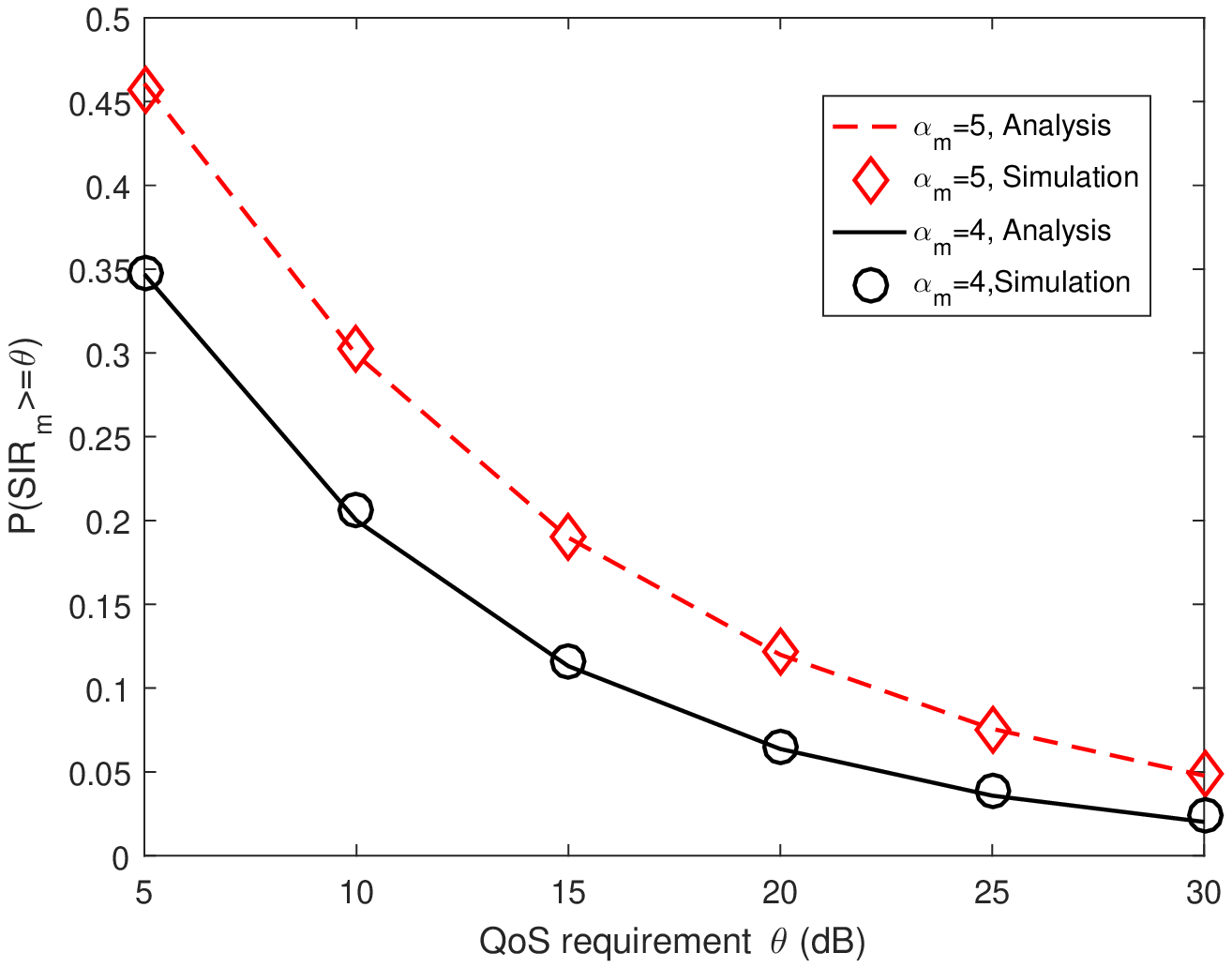}
}
\caption{The successful transmission probabilities when the typical user is served by the D2D helper, SBS and MBS.}
\end{figure*}
We proceed to discuss the property of the proposed algorithm.
In the feasible variable region comprised by constraints (\ref{eq:cons_2_1}) to (\ref{eq:cons_2_3}),
the optimal point is selected at the negative direction with the fastest descent at each iteration.
Thereby, the obtained value of $D\left({\bf{p^{\rm{d}}}},{\bf{p^{\rm{s}}}}\right)$ is
less than or at least equal to the value sourced from the last iteration,
and this algorithm will certainly converge.
From the simulation results shown later,
the proposed algorithm can converge to the sub-optimal solution within a small number of iterations.
At each iteration, the cost for calculating the partial derivatives dominates over the cost of the proposed algorithm,
and the total required number of calculating the partial derivatives scales with $FL$.
When using ${\cal O}$ function for computational analysis, according to \cite{GriewankEvaluating,Zhang2008Computing}.
Thus, when using ${\cal O}$ function for complexity analysis, 
the overall computational complexity of the proposed algorithm is ${\cal O}(FL)$.

\begin{table}[t]
\centering{}
\caption{The Values of Simulation Parameters}
\begin{tabular}{c|c}
\hline
Parameters & Values\tabularnewline
\hline
$\theta$& $5$ dB\cite{Song2017OptimalContent}\tabularnewline
\hline
$F$& $20$ \cite{Muller2016Smart}\tabularnewline
\hline
$L$& $2$\cite{Xiang2018Secure}\tabularnewline
\hline
$s_{f,l}$& $25$ Mbits\cite{Wu2018Energy}\tabularnewline
\hline
$q$ & $5$ \tabularnewline
\hline
$\alpha$ & $1$ \cite{Zhang2020Double}\tabularnewline
\hline
$M_{\rm{d}}$, $M_{\rm{s}}$ & $200$ Mbits $500$ Mbits \tabularnewline
\hline
$\lambda_{\rm{d}}$, $\lambda_{\rm{s}}$ & $0.01$ \cite{chen2016cache}, $0.001$ \tabularnewline
\hline
$\alpha_{\rm{d}}$, $\alpha_{\rm{s}}$, $\alpha_{\rm{m}}$ & 4 \cite{Yang2016Analysis}\tabularnewline
\hline
\end{tabular}
\end{table}
\section{Simulation Results}
In order to show the correctness of the derived successful transmission probabilities,
similar to \cite{Afshang2016Modeling},
the numerical results of both analysis and Monte-Carlo (MC) simulations for these expressions are displayed.
The MC experiments for each point are performed more than 50,000 times.
Next, we show the numerical results of the service delay.
For comparison, three benchmark strategies are adopted, as follows:
\begin{itemize}
\item Most popular content placement (MPCP): The super layers from the most popular videos
are locally cached in the D2D helpers and SBSs.
This caching scheme is used in most uncoded caching,
e.g., in \cite{Zhang2017Multicast,Tao2015Content}.
\item Equal probability content placement (EPCP): In the local caches of D2D helpers and SBSs,
the super layers of all video files are randomly stored with the same caching probabilities
until all cache sizes are fully utilized.
This scheme overlooks the file popularity distribution and perceptual quality preference.
\item Independent content placement (ICP):
The D2D helpers and SBSs randomly select different super layers to cache,
irrespective to the actual service requirements of users.
\end{itemize}
The values of simulation parameters are listed in Table I.
We have given the references for most values of the simulation parameters,
and the other values are set accordingly.
In specific, the density of the SBS is set as 10 times of the D2D helpers;
the range of the plateau factor should satisfy $0\leq q \leq F$, and we set $q=5$;
and the cache sizes of the D2D helpers and SBSs are designed according to the layer size and the number of video files.

In Fig. 2, we plot the successful transmission probabilities.
It is shown that the performance gap is negligible between our analysis and MC simulations.
This validates (\ref{STP_111}), (\ref{STP_222}) and (\ref{STP_333}).
In Figs. 2(a) and 2(b),
it is revealed that larger caching probabilities have stronger positive effects on successful transmission probabilities,
since the closer serving D2D helper and SBS can be found around the user,
providing stronger received signal strength.
From Fig. 2(c), we can conclude that a larger QoS requirement leads to a lower successful transmission probability.
Furthermore, a smaller path loss exponent can adversely affect the successful transmission probability.
This is because, though the received signal strength increases as the path loss exponent decreases,
it fails to scale with the growing interference caused by other MBSs.

\begin{figure}[t]
\begin{center}
\centering{}\includegraphics[scale=0.4]{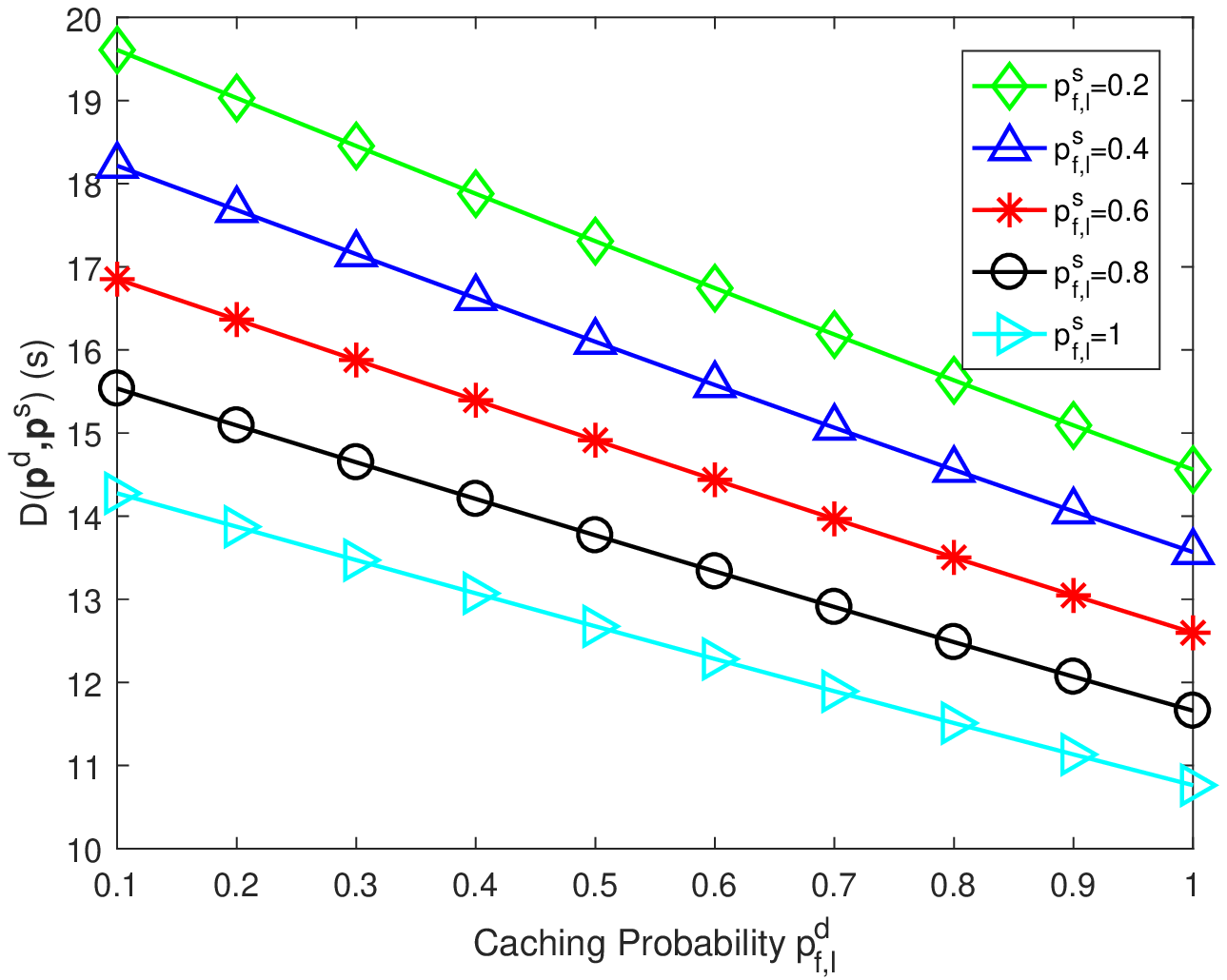}
\caption{The derived service delay versus the D2D caching probability.}
\end{center}
\end{figure}

\begin{figure}[t]
\begin{center}
\centering{}\includegraphics[scale=0.4]{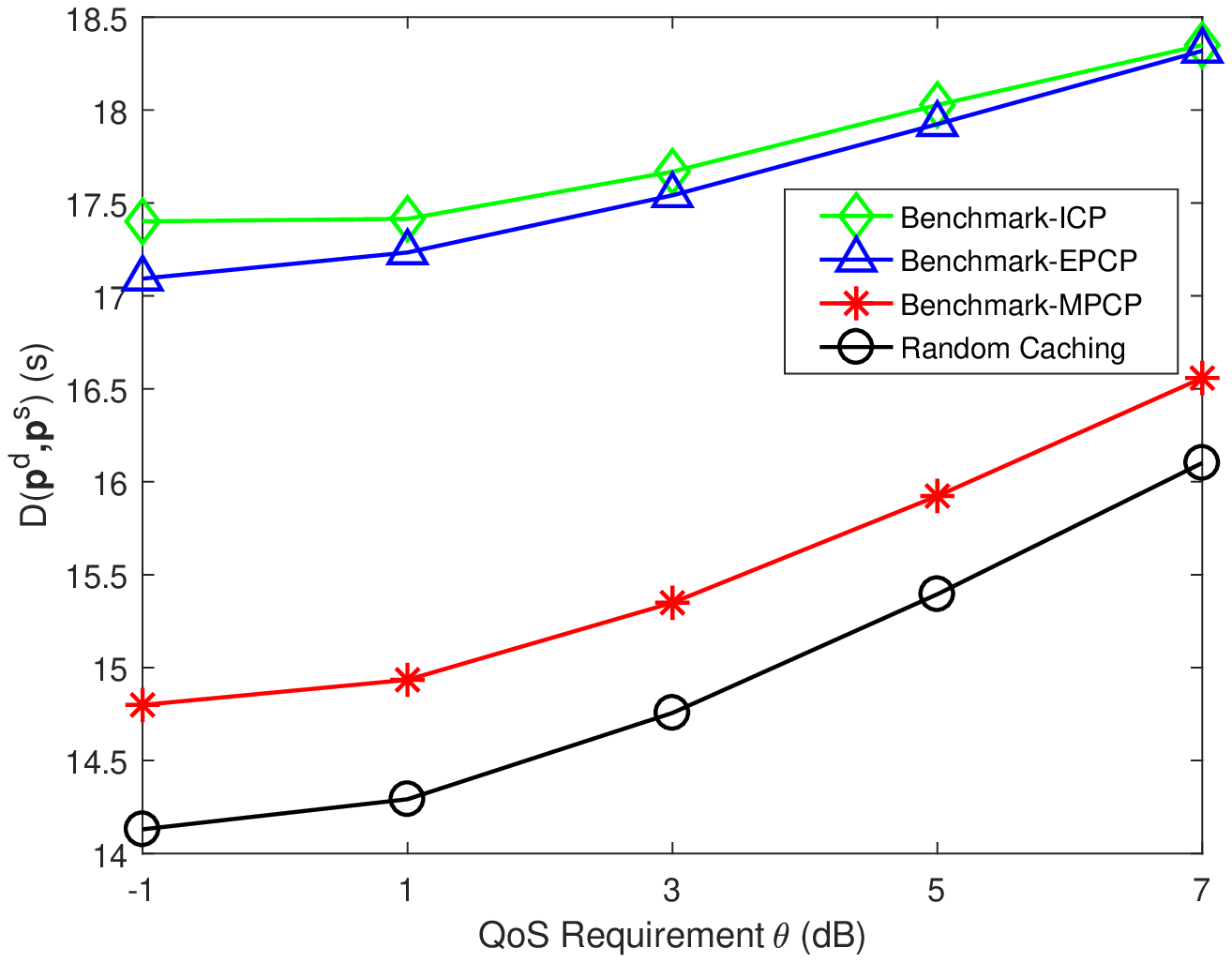}
\caption{The service delay versus the minimum QoS requirement.}
\end{center}
\end{figure}

\begin{figure}[t]
\centering{}\subfloat[The service delay versus the D2D cache size.]{\includegraphics[scale=0.4]{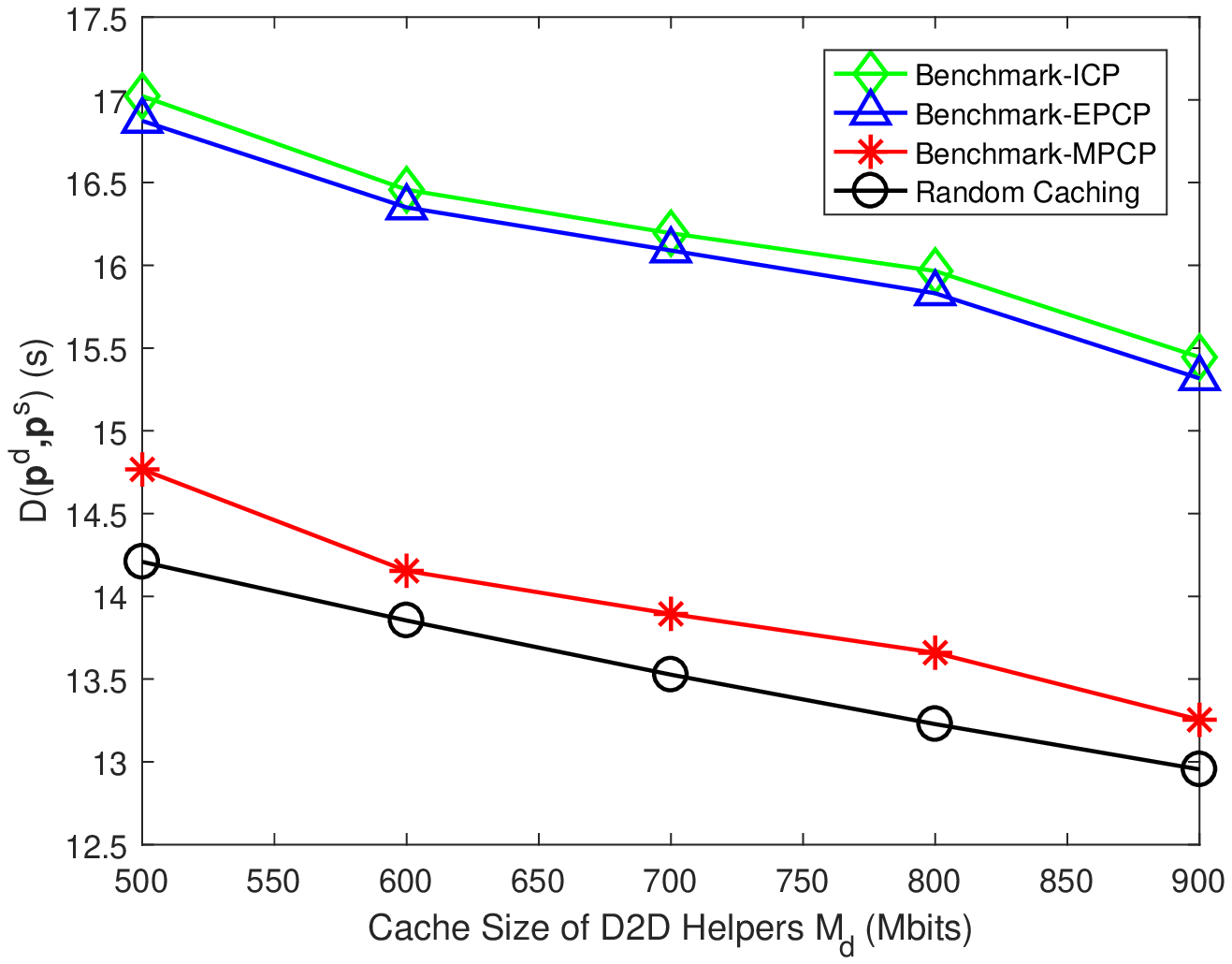}

}

\subfloat[The service delay versus the SBS cache size.]{\includegraphics[scale=0.4]{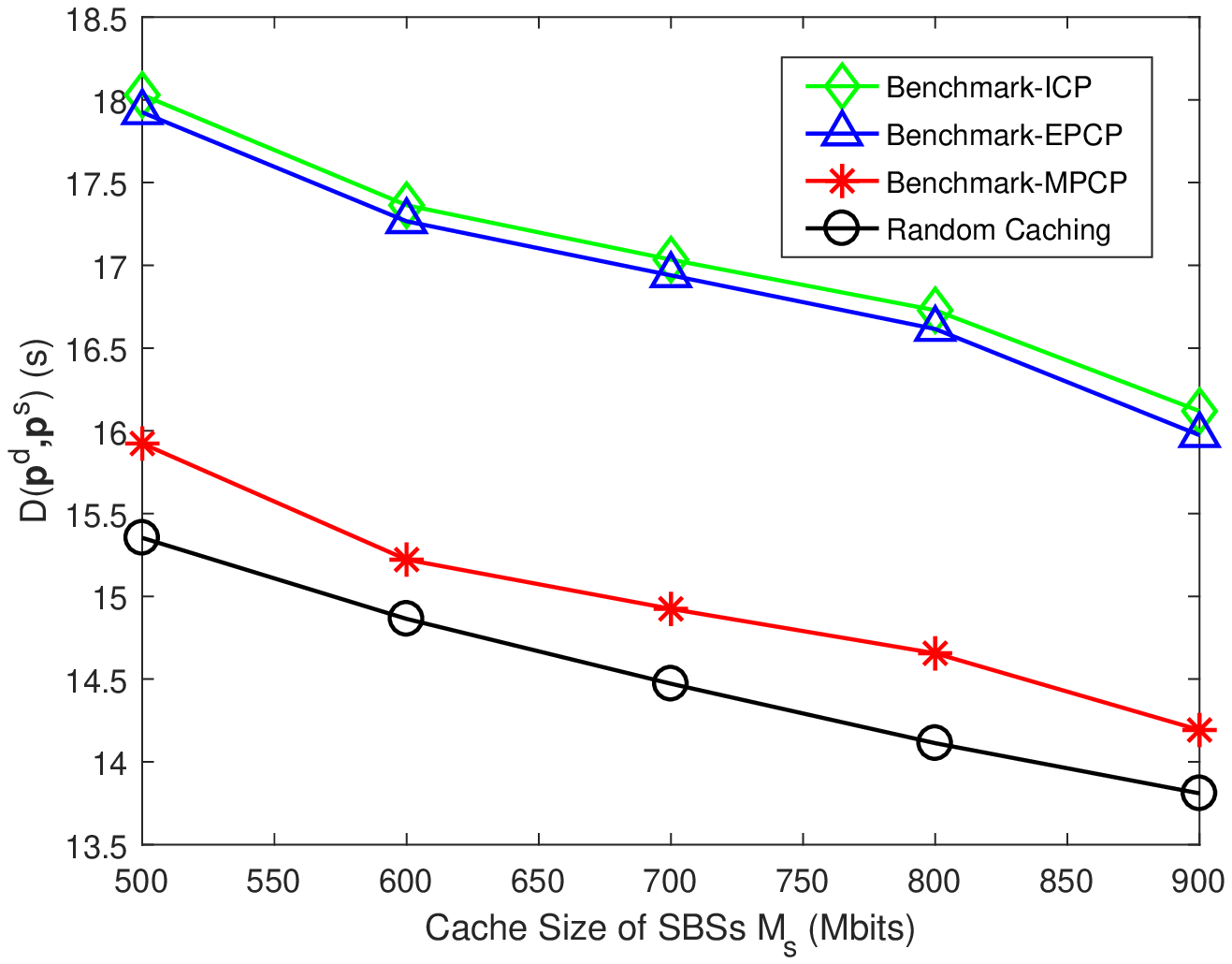}

}
\caption{The service delay versus different cache sizes of the D2D helper and SBS.}
\end{figure}

\begin{figure}[t]
\begin{center}
\centering{}\includegraphics[scale=0.4]{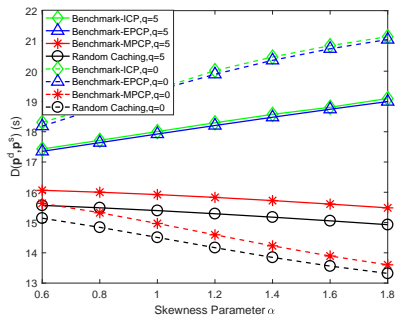}
\caption{The service delay versus the skewness parameter.}
\end{center}
\end{figure}

\begin{figure}[t]
\begin{center}
\centering{}\includegraphics[scale=0.4]{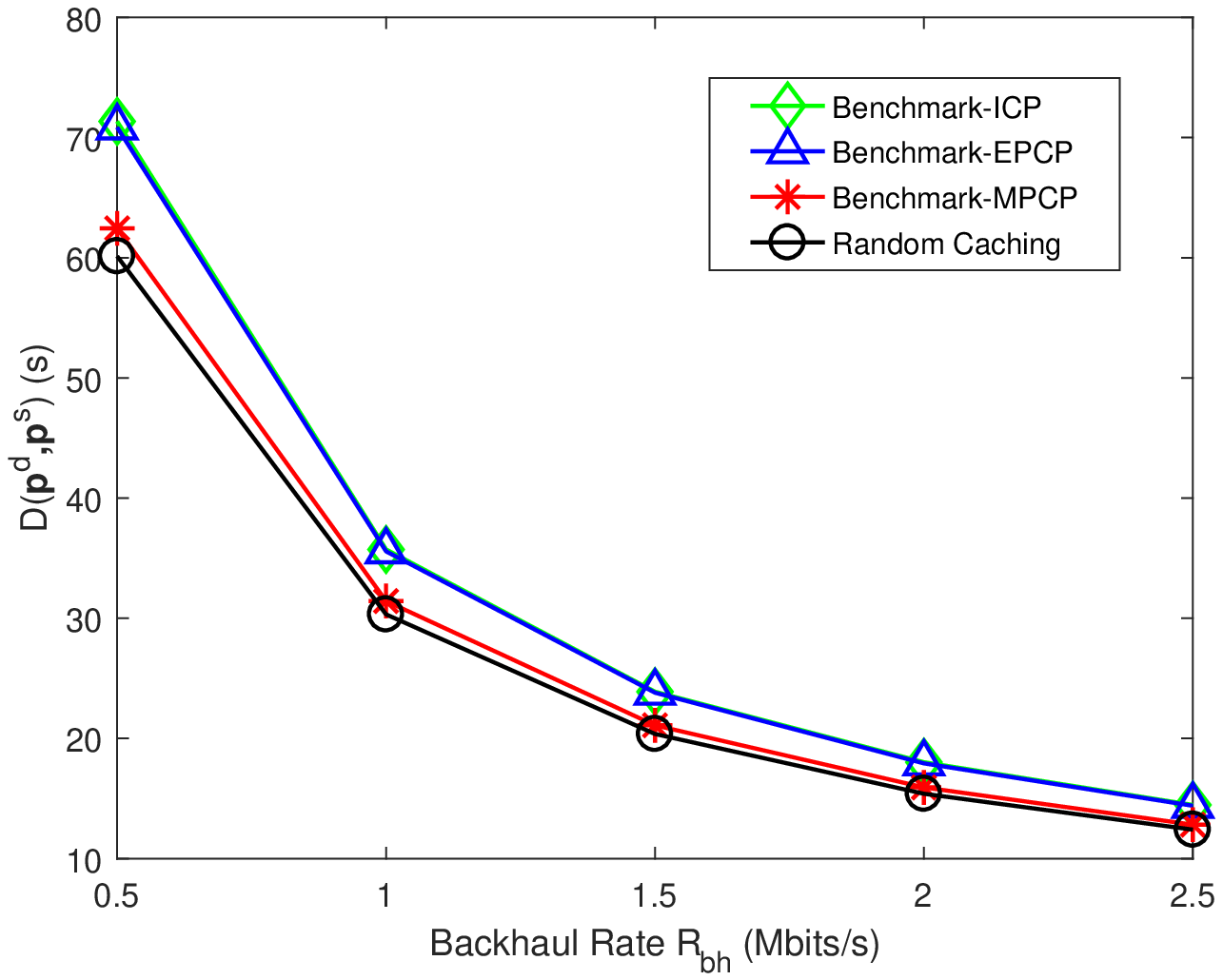}
\caption{The service delay versus the backhaul transmission rate.}
\end{center}
\end{figure}

The service delay in (\ref{eq:Delay}) is shown in Fig. 3,
varying with the caching probabilities.
We can see that larger values of random caching probabilities lead to lower service delay.
This is because larger caching probabilities indicate that
more content requests can be satisfied in the local caches of the D2D helpers and SBSs,
and the contents do not need to be retrieved via backhauls.
In practical implementations, though larger caching probabilities are beneficial for reducing the service delay,
the caching probabilities need to be carefully designed to meet the finite cache sizes.

In Fig. 4, the relationship between the service delay and the minimum QoS requirement $\theta$ is presented.
We can see that the service delay increases as $\theta$ grows.
The reason is as follows.
When $\theta$ grows, the successful transmission probabilities are substantially decreased, which leads to a higher service delay.
In addition, the proposed random caching scheme is superior to the three benchmark strategies in terms of service delay.
This is because random caching is able to fully exploit the accumulated cache sizes of D2D helpers and SBSs,
and more layers can be flexibly placed.
As for the three benchmarks, the MPCP shows better delay performance than EPCP and ICP,
since EPCP and ICP overlook the video popularity and the viewing quality preference.

In Fig. 5, we show the relationship between the service delay and the cache sizes $M_{\rm{d}}$ and $M_{\rm{s}}$.
It is clear that a larger cache size results in a lower service delay.
With a larger cache size, more requested super layers can be satisfied by the local caches of the D2D helpers and SBSs,
avoiding time-consuming content retrievals from the core network via backhauls.
Compared to Figs. 5(a) and 5(b),
we infer that allocating more cache sizes to the D2D helpers leads to more reduced service delay.
This is because the D2D helpers are geographically closer to the typical user,
and obtaining the requested layers from them is more delay-effective.

The relationship between the service delay and the skewness parameter $\alpha$ is provided in Fig. 6.
As $\alpha$ grows, the user requests increasingly focus on a small number of popular videos.
The requested layers are more likely to be locally stored, leading to a lower service delay.
There is an interesting phenomenon that, under the EPCP and ICP schemes, a larger $\alpha$ generates a larger service delay.
The reason is as follows.
The EPCP and ICP schemes ignore the actual video popularity and the viewing quality requirement,
resulting in indistinctive caching policies.
When the user requests concentrate on a small number of popular contents,
the two caching schemes fail to give priority to these popular videos.
More cache sizes are allocated to less popular videos than they are in the proposed caching scheme and MPLP,
leading to frequent backhaul retrievals of the popular files.
Therefore, the two schemes incur the increasing service delay as $\alpha$ grows.
As also shown in Fig. 6, when the plateau parameter $q$ decreases,
the delays of the proposed scheme and MPLP are impaired.
This is because the video popularity distribution is steeper,
and the user requests concentrate on the most popular videos.

In Fig. 7, we present the relationship between the service delay and the backhaul transmission rate $R_{\rm{bh}}$.
We see that a larger $R_{\rm{bh}}$ results in a lower service delay,
and the performance gap between the proposed caching scheme and MPLP decreases as $R_{\rm{bh}}$ grows.
This is due to the fact that, as $R_{\rm{bh}}$ grows,
the transmission rate of backhaul deliveries has an increasingly negligible impact on the service delay.
When $R_{\rm{bh}}$ is high enough, the backhaul capacity would be no longer a limiting factor in large-scale video distributions.

\begin{figure}[t]
\begin{center}
\centering{}\includegraphics[scale=0.4]{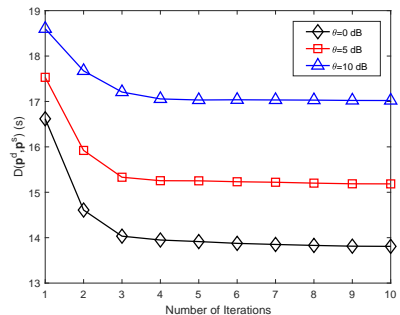}
\caption{The convergence property of the proposed algorithm.}
\end{center}
\end{figure}

In Fig. 8, the convergence property of the proposed algorithm is shown.
Obviously, Algorithm 1 can converge to the sub-optimal solution after a small number of iterations,
which validates the excellent convergence performance of the proposed algorithm.
Additionally, we can observe that larger values of the minimum QoS requirement $\theta$ result in larger service delays,
and in the meanwhile the conclusion derived from Fig. 4 is also clearly verified.
\section{Conclusion}
In this paper, we investigated the random caching scheme for delay minimization in D2D-assisted heterogeneous network.
To provide diversified viewing qualities of multimedia video services,
the super layers were transmitted to the user.
We firstly analyzed the successful transmission probabilities,
and then obtained the close-form expression for the overall service delay.
Based on this expression, we minimized the service delay efficiently by applying the improved standard gradient projection method.
Numerical results validate our analysis of successful transmission probabilities,
and the proposed random caching scheme was shown to be superior to the MPCP, EPCP and ICP strategies.
\begin{appendices}
\section{Proof of Theorem 1}
We start with the calculation of $P_{f,l}^{1}\left(\mathrm{SIR}_{\rm{d}}\geq\theta\right)$.
Keep in mind that, in Case 1, the intra-tier interference results from the D2D helpers that are farther away than $\rm{d}_{0}$ from the typical user.
Hence, $P_{f,l}^{1}\left(\mathrm{SIR}_{\rm{d}}\geq\theta\right)$ can be calculated as
\begin{gather}
P_{f,l}^{1}\left(\mathrm{SIR}_{\rm{d}}\geq\theta\right)=\int_{0}^{r_{\rm{c}}}f_{\rm{d}}(r)P_{f,l}^{1}\left(\mathrm{SIR}_{\rm{d}}\geq\theta|r_{0}^{\rm{d}}=r\right)\mathrm{d}r\label{eq:aa}
\end{gather}
where
\begin{gather}
f_{\rm{d}}(r)=\frac{2\pi\lambda_{\rm{d}}p_{f,l}^{\rm{d}}r\exp\left(-\lambda_{\rm{d}}p_{f,l}^{\rm{d}}\pi r^{2}\right)}{1-\exp\left(-\lambda_{\rm{d}}p_{f,l}^{\rm{d}}\pi r_{\rm{c}}^{2}\right)}\label{eq:a_1}
\end{gather}
is the probability density function (PDF) of the distance between the D2D helper $\rm{d}_{0}$ and the user.
For notational simplicity, the interference from other non-serving D2D helpers is $I_{\rm{d}_{1}}=\sum_{k\in\mathrm{\Phi_{\rm{d}}}\setminus \rm{d}_{0}}\left|h_{k}^{\rm{d}}\right|^{2}(r_{k}^{\rm{d}})^{-\alpha_{\rm{d}}}$.
The conditional successful transmission probability $P_{f,l}^{1}\left(\mathrm{SIR}_{\rm{d}}\geq\theta|r_{0}^{\rm{d}}=r\right)$ is calculated as
\begin{align}
 P_{f,l}^{1}&\left(\mathrm{SIR}_{\rm{d}}\geq\theta|r_{0}^{\rm{d}}=r\right)\nonumber\\
 &=P_{f,l}^{1}\left(\left|h_{0}^{\rm{d}}\right|^{2}\geq I_{\rm{d}_{1}}r^{\alpha_{\rm{d}}}\theta\right)\nonumber \\
 &\stackrel{\text{(a)}}{=}\mathbb{E}_{I_{\rm{d}_{1}}}\left[\exp\left(-\theta r^{\alpha_{\rm{d}}}I_{\rm{d}_{1}}\right)\right]\nonumber\\
  & =\mathbb{E}_{\mathrm{\Phi_{d}},h_{k}^{\rm{d}}}\left[\prod_{k\in\mathrm{\Phi_{\rm{d}}}\setminus \rm{d}_{0}}\exp\left(-\theta r^{\alpha_{\rm{d}}}\left|h_{k}^{\rm{d}}\right|^{2}\left(r_{k}^{\rm{d}}\right)^{-\alpha_{\rm{d}}}\right)\right]\nonumber\\
  & \stackrel{\text{(b)}}{=}\mathbb{E}_{\mathrm{\Phi_{d}}}\left[\prod_{k\in\mathrm{\Phi_{d}}\setminus \rm{d}_{0}}\frac{1}{1+\theta r^{\alpha_{\rm{d}}}\left(r_{k}^{\rm{d}}\right)^{-\alpha_{\rm{d}}}}\right]\nonumber\\
 & \stackrel{\text{(c)}}{=}\exp\left(-2\pi\lambda_{\rm{d}}\int_{r}^{\infty}\left(1-\frac{1}{1+\theta r^{\alpha_{\rm{d}}}\rho^{-\alpha_{\rm{d}}}}\right)\rho\mathrm{\rm{d}}\rho\right)\nonumber \\
 & =\exp\left(-\pi\lambda_{\rm{d}}\theta^{\frac{2}{\alpha_{\rm{d}}}}r^{2}G_{\alpha_{\rm{d}}}\left(\theta^{-\frac{2}{\alpha_{\rm{d}}}}\right)\right)\label{eq:a_2}
\end{align}
where $G_{a}(b)=\int_{b}^{\infty}\frac{1}{1+x^{\frac{a}{2}}}\mathrm{d}x$.
Since $\left|h_{0}^{\rm{d}}\right|^{2}$ follows the exponential distribution with unit mean,
its PDF is $f_{\left|h_{0}^{\rm{d}}\right|^{2}}(x)=\exp(-x)$, and then $P_{f,l}^{1}\left(\left|h_{0}^{\rm{d}}\right|^{2}\geq I_{\rm{d}_{1}}r^{\alpha_{\rm{d}}}\theta\right)$ can be accordingly calculated.
As $I_{\rm{d}_{1}}$ is another stochastic variable in $P_{f,l}^{1}\left(\left|h_{0}^{\rm{d}}\right|^{2}\geq I_{\rm{d}_{1}}r^{\alpha_{\rm{d}}}\theta\right)$, the expectation of $I_{\rm{d}_{1}}$ is supposed to be considered,
as shown in Step (a).
$\left|h_{k}^{\rm{d}}\right|^{2}$ also follows the exponential distribution with unit mean.
Therefore, its expectation can be easily yielded based on its known PDF, as given in Step (b).
Finally, Step (c) can be obtained by leveraging the probability generating functional (PGFL) property of the PPP,
and more details for this special property can refer to Definition 4.3 shown in \cite{Haenggi2012Stochastic}.
As a result, the expression for $P_{f,l}^{1}\left(\mathrm{SIR}_{\rm{d}}\geq\theta\right)$
can be obtained by substituting (\ref{eq:a_1}) and (\ref{eq:a_2}) into (\ref{eq:aa}),
given as follows
\begin{align}
 P_{f,l}^{1}&\left(\mathrm{SIR}_{\rm{d}}\geq\theta\right)\nonumber\\
 &=\int_{0}^{r_{\rm{c}}}\frac{2\pi p_{f,l}^{\rm{d}}\lambda_{\rm{d}}r}{1-\exp\left(-\lambda_{\rm{d}}p_{f,l}^{\rm{d}}\pi r_{\rm{c}}^{2}\right)}\nonumber\\
 &\exp\left(-\lambda_{\rm{d}}\pi r^{2}\left(p_{f,l}^{\rm{d}}+\theta^{\frac{2}{\alpha_{\rm{d}}}}G_{\alpha_{\rm{d}}}\left(\theta^{-\frac{2}{\alpha_{\rm{d}}}}\right)\right)\right)\mathrm{d}r\nonumber\\
 & =\int_{0}^{r_{\rm{c}}^2}\frac{\pi p_{f,l}^{\rm{d}}\lambda_{\rm{d}}}{1-\exp\left(-\lambda_{\rm{d}}p_{f,l}^{\rm{d}}\pi r_{\rm{c}}^{2}\right)}\nonumber\\
 &\exp\left(-\lambda_{\rm{d}}\pi r^{2}\left(p_{f,l}^{\rm{d}}+\theta^{\frac{2}{\alpha_{\rm{d}}}}G_{\alpha_{\rm{d}}}\left(\theta^{-\frac{2}{\alpha_{\rm{d}}}}\right)\right)\right)\mathrm{d}r^{2}\nonumber\\
 & =\frac{p_{f,l}^{\rm{d}}\left(1-\exp\left(-\lambda_{\rm{d}}\pi r^{2}\left(p_{f,l}^{\rm{d}}+\theta^{\frac{2}{\alpha_{\rm{d}}}}G_{\alpha_{\rm{d}}}\left(\theta^{-\frac{2}{\alpha_{\rm{d}}}}\right)\right)\right)\right)}{\left(1-\exp\left(-\lambda_{\rm{d}}p_{f,l}^{\rm{d}}\pi r_{\rm{c}}^{2}\right)\right)\left(p_{f,l}^{\rm{d}}+\theta^{\frac{2}{\alpha_{\rm{d}}}}G_{\alpha_{\rm{d}}}\left(\theta^{-\frac{2}{\alpha_{\rm{d}}}}\right)\right)}.\label{A_1}
\end{align}
From (\ref{A_1}), we obtain that when $p_{f,l}^{\rm{d}}=0$,
the denominator of $P_{f,l}^{1}\left(\mathrm{SIR}_{\rm{d}}\geq\theta\right)$ equals to zero.
In order to avoid this, it is stipulated that, when $p_{f,l}^{\rm{d}}=0$, $P_{f,l}^{1}\left(\mathrm{SIR}_{\rm{d}}\geq\theta\right)=0$ holds.
This is consistent with the practical situation,
since the successful transmission probability from the D2D helper equals to zero when there is no potential serving helper.
As a result, we can obtain the successful transmission probability in Case 1.

We proceed with Case 2.
The expression for $P_{f,l}^{2}\left(\mathrm{SIR}_{\rm{d}}\geq\theta\right)$ can be given by
\begin{gather}
P_{f,l}^{2}\left(\mathrm{SIR}_{\rm{d}}\geq\theta\right)=\int_{0}^{r_{\rm{c}}}f_{\rm{d}}(r)P_{f,l}^{2}\left(\mathrm{SIR}_{\rm{d}}\geq\theta|r_{0}^{\rm{d}}=r\right)\mathrm{d}r\label{eq:bb}
\end{gather}
\noindent where the interferences come from the D2D helpers
that are geographically closer than the D2D helper $\rm{d}_{0}$ and the helpers that are farther away than $\rm{d}_{0}$.
The interferences generated by the closer and farther SBSs are denoted by $I_{\rm{d}_{2}}$ and $I_{\rm{d}_{3}}$, respectively.
According to the above analysis, $P_{f,l}^{2}\left(\mathrm{SIR}_{\rm{d}}\geq\theta|r_{0}^{\rm{d}}=r\right)$ can be obtained by
\begin{align}
 P_{f,l}^{2}&\left(\mathrm{SIR}_{\rm{d}}\geq\theta|r_{0}^{\rm{d}}=r\right)\nonumber\\
 & =P_{f,l}^{2}\left(\left|h_{0}^{\rm{d}}\right|^{2}r^{-\alpha_{\rm{d}}}\geq\left(I_{\rm{d}_{2}}+I_{\rm{d}_{3}}\right)\theta\right)\nonumber\\
  & =\mathbb{E}_{I_{\rm{d}_{2},}I_{\rm{d}_{3}}}\left[\exp\left(-\theta r^{\alpha_{\rm{d}}}\left(I_{\rm{d}_{2}}+I_{\rm{d}_{3}}\right)\right)\right]\nonumber\\
 & =\mathcal{L}_{I_{\rm{d}_{2}}}\left(\theta r^{\alpha_{\rm{d}}}\right)\mathcal{L}_{I_{\rm{d}_{3}}}\left(\theta r^{\alpha_{\rm{d}}}\right)
\end{align}
where  $\mathcal{L}_{I_{\rm{d}_{2}}}\left(\theta r^{\alpha_{\rm{d}}}\right)$ and $\mathcal{L}_{I_{\rm{d}_{3}}}\left(\theta r^{\alpha_{\rm{d}}}\right)$ are the Laplace transforms
regarding $I_{\rm{d}_{2}}$ and $I_{\rm{d}_{3}}$, respectively.
The Laplace transforms
$L_{I_{\rm{d}_{2}}}\left(\theta r^{\alpha_{\rm{d}}}\right)$ and $L_{I_{\rm{d}_{3}}}\left(\theta r^{\alpha_{\rm{d}}}\right)$ are given by
\begin{align}
&\mathcal{L}_{I_{\rm{d}_{2}}}\left(\theta r^{\alpha_{\rm{d}}}\right)=\exp\left(-\pi\lambda_{\rm{d}}\theta^{\frac{2}{\alpha_{\rm{d}}}}r^{2}\left(G_{\alpha_{\rm{d}}}(0)-G_{\alpha_{\rm{d}}}\left(\theta^{-\frac{2}{\alpha_{\rm{d}}}}\right)\right)\right),\label{eq:b_1}
\end{align}
\begin{align}
&\mathcal{L}_{I_{\rm{d}_{3}}}\left(\theta r^{\alpha_{\rm{d}}}\right)=\exp\left(-\pi\lambda_{\rm{d}}\theta^{\frac{2}{\alpha_{\rm{d}}}}r^{2}G_{\alpha_{\rm{d}}}\left(\theta^{-\frac{2}{\alpha_{\rm{d}}}}\right)\right).\label{eq:b_2}
\end{align}
\noindent Based on (\ref{eq:b_1}) and (\ref{eq:b_2}), $P_{f,l}^{2}\left(\mathrm{SIR}_{\rm{d}}\geq\theta\right)$ can be obtained by
\begin{align}
 P_{f,l}^{2}&\left(\mathrm{SIR}_{\rm{d}}\geq\theta\right)\nonumber\\
 & =\int_{0}^{r_{\rm{c}}}\frac{2\pi p_{f,l}^{\rm{d}}\lambda_{\rm{d}}r}{1-\exp\left(-\lambda_{\rm{d}}p_{f,l}^{\rm{d}}\pi r_{\rm{c}}^{2}\right)}\nonumber\\
 &\exp\left(-\lambda_{\rm{d}}\pi r^{2}\left(p_{f,l}^{\rm{d}}+\theta^{\frac{2}{\alpha_{\rm{d}}}}G_{\alpha_{\rm{d}}}(0)\right)\right)\mathrm{d}r\nonumber\\
 & =\int_{0}^{r_{\rm{c}}^{2}}\frac{\pi p_{f,l}^{\rm{d}}\lambda_{\rm{d}}}{1-\exp\left(-\lambda_{\rm{d}}p_{f,l}^{\rm{d}}\pi r_{\rm{c}}^{2}\right)}\nonumber\\
 &\exp\left(-\lambda_{\rm{d}}\pi r^{2}\left(p_{f,l}^{\rm{d}}+\theta^{\frac{2}{\alpha_{\rm{d}}}}G_{\alpha_{\rm{d}}}(0)\right)\right)\mathrm{d}r^{2}\nonumber\\
 & =\frac{p_{f,l}^{\rm{d}}\left(1-\exp\left(-\lambda_{\rm{d}}\pi r^{2}\left(p_{f,l}^{\rm{d}}+\theta^{\frac{2}{\alpha_{\rm{d}}}}G_{\alpha_{\rm{d}}}(0)\right)\right)\right)}{\left(1-\exp\left(-\lambda_{\rm{d}}p_{f,l}^{\rm{d}}\pi r_{c}^{2}\right)\right)\left(p_{f,l}^{\rm{d}}+\theta^{\frac{2}{\alpha_{\rm{d}}}}G_{\alpha_{\rm{d}}}(0)\right)}\nonumber.
\end{align}
\noindent On the other hand, when $P_{f,l}^{\rm{d}}=0$, we set $P_{f,l}^{2}\left(\mathrm{SIR}_{\rm{d}}\geq\theta\right)$=0.
From the above analysis, the proof of Theorem 1 is complete.
\QEDA
\section{Proof of Theorem 3}
When the user is served by the nearest MBS, the successful transmission probability $P\left(\mathrm{SIR}_{\rm{m}}\geq\theta\right)$ is calculated as\\
\begin{gather}
P\left(\mathrm{SIR}_{\rm{m}}\geq\theta\right)=\int_{0}^{\infty}f_{\rm{m}}(r)P\left(\mathrm{SIR}_{\rm{m}}\geq\theta|r_{0}^{\rm{m}}=r\right)\mathrm{d}r\label{eq:dd}
\end{gather}
where
\begin{gather}
f_{\rm{m}}(r)=2\pi\lambda_{\rm{m}}r\exp\left(-\lambda_{\rm{m}}\pi r^{2}\right).
\end{gather}
We denote the interference from other non-serving MBSs as
$I_{\rm{m}}=\sum_{k\in\mathrm{\Phi_{m}\setminus m_{0}}}\left|h_{k}^{\rm{m}}\right|^{2}\left(r_{k}^{\rm{m}}\right)^{-\alpha_{\rm{m}}}$. $P(\mathrm{SIR}_{\rm{m}}\geq\theta|r_{0}^{\rm{m}}=r)$ is calculated as
\begin{align}
P&\left(\mathrm{SIR}_{\rm{m}}\geq\theta|r_{0}^{\rm{m}}=r\right)=\mathcal{L}_{I_{\rm{m}}}\left(\theta r^{\alpha_{\rm{m}}}\right)\nonumber
\end{align}
\begin{align}
&=\exp\left(-\pi\lambda_{\rm{m}}r^{2}\theta^{\frac{2}{\alpha_{\rm{m}}}}G_{\alpha_{\rm{m}}}\left(\theta^{-\frac{2}{\alpha_{\rm{m}}}}\right)\right). \label{eq:dd_1}
\end{align}
\noindent By substituting (\ref{eq:dd_1}) into (\ref{eq:dd}), we can obtain
\begin{align}
 P&\left(\mathrm{SIR}_{\rm{m}}\geq\theta\right)\nonumber\\
 & =2\pi\lambda_{\rm{m}}\int_{0}^{\infty}r\exp\left(-\pi\lambda_{\rm{m}}r^{2}\theta^{\frac{2}{\alpha_{\rm{m}}}}G_{\alpha_{\rm{m}}}\left(\theta^{-\frac{2}{\alpha_{\rm{m}}}}\right)\right)\mathrm{d}r\nonumber\\
 & =\pi\lambda_{\rm{m}}\int_{0}^{\infty}\exp\left(-\pi\lambda_{\rm{m}}r^{2}\left(1+\theta^{\frac{2}{\alpha_{\rm{m}}}}G_{\alpha_{m}}\left(\theta^{-\frac{2}{\alpha_{\rm{m}}}}\right)\right)\right)\mathrm{d}r^{2}\nonumber\\
 &=\left[1+\theta^{\frac{2}{\alpha_{\rm{m}}}}G_{\alpha_{\rm{m}}}\left(\theta^{-\frac{2}{\alpha_{\rm{m}}}}\right)\right]^{-1}.
\end{align}
As a result, the expression for $P\left(\mathrm{SIR}_{\rm{m}}\geq\theta\right)$ is obtained.
\QEDA
%
\end{appendices}
\bibliographystyle{IEEEtran}

\begin{thebibliography}{10}

\bibitem{Gupta2012H}
R.~Gupta, A.~Pulipaka, P.~Seeling, L.~J. Karam, and M.~Reisslein, ``{H.264}
  coarse grain scalable {(CGS)} and medium grain scalable {(MGS)} encoded
  video: A trace based traffic and quality evaluation,'' \emph{IEEE Trans.
  Broadcast.}, vol.~58, no.~3, pp. 428--439, Sep. 2012.

\bibitem{Zhang2018Energy2}
R.~Zhang, Y.~Li, C.~X. Wang, Y.~Ruan, Y.~Fu, and H.~Zhang, ``Energy-spectral
  efficiency trade-off in underlaying mobile {D2D} communications: {An}
  economic efficiency perspective,'' \emph{IEEE Trans. Wireless Commun.},
  vol.~17, no.~7, pp. 4288--4301, Jul. 2018.

\bibitem{2016Ericsson}
Ericsson, ``Ericsson mobility report,''
  \url{http://www.ericsson.com/en/mobility-report/reports/november-2017}, Nov.
  2017.

\bibitem{Zhang2018Near}
X.~Zhang, T.~Lv, and S.~Yang, ``Near-optimal layer placement for scalable
  videos in cache-enabled small-cell networks,'' \emph{IEEE Trans. Veh. Tech.},
  vol.~67, no.~9, pp. 9047--9051, Sep. 2018.

\bibitem{Tao2015Content}
M.~Tao, E.~Chen, H.~Zhou, and W.~Yu, ``Content-centric sparse multicast
  beamforming for cache-enabled cloud {RAN},'' \emph{IEEE Trans. Wireless
  Commun.}, vol.~15, no.~9, pp. 6118--6131, Sep. 2016.

\bibitem{Chen2017Cooperative}
Z.~Chen, J.~Lee, T.~Q.~S. Quek, and M.~Kountouris, ``Cooperative caching and
  transmission design in cluster-centric small cell networks,'' \emph{IEEE
  Trans. Wireless Commun.}, vol.~16, no.~5, pp. 3401--3415, May 2017.

\bibitem{Zhang2017Multicast}
X.~Zhang, H.~Gao, and T.~Lv, ``Multicast beamforming for scalable videos in
  cache-enabled heterogeneous networks,'' in \emph{Proc. IEEE Wireless Commun.
  Networking Conf. (WCNC)}, San Francisco, CA, Mar. 2017, pp. 1--6.

\bibitem{Xu2017Fundamental}
F.~Xu, M.~Tao, and K.~Liu, ``Fundamental tradeoff between storage and latency
  in cache-aided wireless interference networks,'' \emph{IEEE Trans. Inf.
  Theory}, vol.~63, no.~11, pp. 7464--7491, Jun. 2017.

\bibitem{Lampiris2018Adding}
E.~Lampiris and P.~Elia, ``Adding transmitters dramatically boosts
  coded-caching gains for finite file sizes,'' \emph{IEEE J. Sel. Areas
  Commun.}, vol.~36, no.~6, pp. 1176--1188, Jun. 2018.

\bibitem{Wen2017Random}
W.~Wen, Y.~Cui, F.~C. Zheng, S.~Jin, and Y.~Jiang, ``Random caching based
  cooperative transmission in heterogeneous wireless networks,'' \emph{IEEE
  Trans. Commun.}, vol.~66, no.~7, pp. 2809--2825, Jul. 2018.

\bibitem{cui2016analysis}
Y.~Cui and D.~Jiang, ``Analysis and optimization of caching and multicasting in
  large-scale cache-enabled heterogeneous wireless networks,'' \emph{IEEE
  Trans. Wireless Commun.}, vol.~15, no.~7, pp. 5101--5112, Jul. 2016.

\bibitem{Zheng2017Probabilistic}
C.~Zheng, N.~Pappas, and M.~Kountouris, ``Probabilistic caching in wireless
  {D2D} networks: Cache hit optimal versus throughput optimal,'' \emph{IEEE
  Commun. Lett.}, vol.~21, no.~3, pp. 584--587, Mar. 2017.

\bibitem{Zhang2018Energy}
X.~Zhang, T.~Lv, W.~Ni, J.~M. Cioffi, N.~C. Beaulieu, and Y.~J. Guo,
  ``Energy-efficient caching for scalable videos in heterogeneous networks,''
  \emph{IEEE J. Sel. Areas Commun.}, vol.~36, no.~8, pp. 1802--1815, Aug. 2018.

\bibitem{Wu2017Energy}
Q.~Wu, G.~Y. Li, W.~Chen, and D.~W.~K. Ng, ``Energy-efficient {D2D} overlaying
  communications with spectrum-power trading,'' \emph{IEEE Trans. Wireless
  Commun.}, vol.~16, no.~7, Jul. 2017.

\bibitem{Ahmed2018A}
M.~Ahmed, Y.~Li, M.~Waqas, M.~Sheraz, D.~Jin, and Z.~Han, ``A survey on
  socially aware device-to-device communications,'' \emph{IEEE Commun. Surveys
  Tuts.}, vol.~20, no.~3, pp. 2169--2197, Thirdquarter 2018.

\bibitem{chen2016cache}
B.~Chen, C.~Yang, and A.~F. Molisch, ``Cache-enabled device-to-device
  communications: Offloading gain and energy cost,'' \emph{IEEE Trans. Wireless
  Commu.}, vol.~16, no.~7, pp. 4519--4536, Jul. 2017.

\bibitem{Chen2018Caching}
B.~Chen and C.~Yang, ``Caching policy for cache-enabled {D2D} communications by
  learning user preference,'' \emph{IEEE Trans. Commun.}, vol.~66, no.~12, pp.
  6586--6601, Dec. 2018.

\bibitem{Deng2018The}
N.~Deng and M.~Haenggi, ``The benefits of hybrid caching in {Gauss-Poisson D2D}
  networks,'' \emph{IEEE J. Sel. Areas Commun.}, vol.~36, no.~8, pp.
  1217--1230, Aug. 2018.

\bibitem{Zhang2016Efficient}
L.~Zhang, M.~Xiao, G.~Wu, and S.~Li, ``Efficient scheduling and power
  allocation for {D2D}-assisted wireless caching networks,'' \emph{IEEE Trans.
  Commun.}, vol.~64, no.~6, pp. 2438--2452, Jun. 2016.

\bibitem{Schwarz2007Overview}
H.~Schwarz, D.~Marpe, and T.~Wiegand, ``Overview of the scalable video coding
  extension of the {H.264/AVC} standard,'' \emph{IEEE Trans. Circuits Syst.
  Video Technol.}, vol.~17, no.~9, pp. 1103--1120, Sep. 2007.

\bibitem{Guo2018Multi}
C.~Guo, Y.~Cui, D.~W.~K. Ng, and Z.~Liu, ``Multi-quality multicast beamforming
  with scalable video coding,'' \emph{IEEE Trans. Commun.}, vol.~66, no.~11,
  pp. 5662--5677, Nov. 2018.

\bibitem{Ostovari2015Scalable}
P.~Ostovari, J.~Wu, A.~Khreishah, and N.~B. Shroff, ``Scalable video streaming
  with helper nodes using random linear network coding,'' \emph{IEEE/ACM Trans.
  Netw.}, vol.~24, no.~3, pp. 1574--1587, Jun. 2015.

\bibitem{Zhan2018SVC}
C.~{Zhan} and G.~{Yao}, ``{SVC} video delivery in cache-enabled wireless
  {HetNet},'' \emph{IEEE Syst. J.}, vol.~12, no.~4, pp. 3885--3888, Dec. 2018.

\bibitem{Zhang2017Layered}
Z.~{Zhang}, D.~{Liu}, and Y.~{Yuan}, ``Layered hierarchical caching for
  {SVC}-based {HTTP} adaptive streaming over {C-RAN},'' in \emph{Proc. IEEE
  Wireless Commun. Networking Conf. (WCNC)}, San Frnacisco,CA, Mar. 2017, pp.
  1--6.

\bibitem{Chen2015Delay}
Z.~Chen, L.~Qiu, Y.~Jin, and X.~Liang, ``Delay-aware uplink user association
  and power control in heterogeneous cellular networks,'' \emph{IEEE Wireless
  Commun. Lett.}, vol.~4, no.~6, pp. 661--664, Dec. 2015.

\bibitem{Amer2018Inter}
R.~Amer, M.~M. Butt, M.~Bennis, and N.~Marchetti, ``Inter-cluster cooperation
  for wireless {D2D} caching networks,'' \emph{IEEE Trans. Wireless Commun.},
  vol.~17, no.~9, pp. 6108--6121, Jul. 2018.

\bibitem{Amer2018Optimizing}
R.~Amer, H.~Elsawy, M.~M. Butt, E.~A. Jorswieck, M.~Bennis, and N.~Marchetti,
  ``Optimizing joint probabilistic caching and communication for clustered
  {D2D} networks,'' \emph{in arXiv}, 2018.

\bibitem{Li2018Learning}
Y.~Li, C.~Zhong, M.~C. Gursoy, and S.~Velipasalar, ``Learning-based delay-aware
  caching in wireless {D2D} caching networks,'' \emph{IEEE Access}, vol.~6, pp.
  77\,250--77\,264, Nov. 2018.

\bibitem{Wang2016Joint}
Y.~Wang, X.~Tao, X.~Zhang, and G.~Mao, ``Joint caching placement and user
  association for minimizing user download delay,'' \emph{IEEE Access}, vol.~4,
  pp. 8625--8633, 2016.

\bibitem{Zhang2016Socially}
G.~Zhang, K.~Yang, and H.~Chen, ``Socially aware cluster formation and radio
  resource allocation in {D2D} networks,'' \emph{IEEE Wireless Commun.},
  vol.~23, no.~4, pp. 68--73, Aug. 2016.

\bibitem{Zhang2017Cost}
S.~Zhang, N.~Zhang, P.~Yang, and X.~Shen, ``Cost-effective cache deployment in
  mobile heterogeneous networks,'' \emph{IEEE Trans. Veh. Tech.}, vol.~66,
  no.~12, pp. 11\,264--11\,276, Dec. 2017.

\bibitem{Wildemeersch2014Successive}
M.~Wildemeersch, T.~Q.~S. Quek, M.~Kountouris, A.~Rabbachin, and C.~H. Slump,
  ``Successive interference cancellation in heterogeneous networks,''
  \emph{IEEE Trans. Commun.}, vol.~62, no.~12, pp. 4440--4453, Dec. 2014.

\bibitem{Xu2017Modeling}
X.~Xu and M.~Tao, ``Modeling, analysis, and optimization of coded caching in
  small-cell networks,'' \emph{IEEE Trans. Commun.}, vol.~65, no.~8, pp.
  3415--3428, Aug. 2017.

\bibitem{Song2017Control}
H.~{Song}, X.~{Fang}, L.~{Yan}, and Y.~{Fang}, ``Control/user plane decoupled
  architecture utilizing unlicensed bands in {LTE} systems,'' \emph{IEEE
  Wireless Commun.}, vol.~24, no.~5, pp. 132--142, Oct. 2017.

\bibitem{Zhang2019Economical}
X.~Zhang, T.~Lv, Y.~Ren, W.~Ni, N.~C. Beaulieu, and Y.~J. Guo, ``Economical
  caching for scalable videos in cache-enabled heterogeneous networks,''
  \emph{IEEE J. Sel. Areas Commun.}, vol.~37, no.~7, pp. 4440--4453, May 2018.

\bibitem{breslau1999web}
L.~Breslau, P.~Cao, L.~Fan, G.~Phillips, and S.~Shenker, ``Web caching and
  {Z}ipf-like distributions: Evidence and implications,'' in \emph{Proc. IEEE
  Conf. on Computer Commun. (INFOCOM)}, New York, USA, Mar. 1999, pp. 126--134.

\bibitem{Lee2019Throughput}
M.~Lee, M.~Ji, A.~Molisch, and N.~Sastry, ``Throughput-outage analysis and
  evaluation of cache-aided {D2D} networks with measured popularity
  distributions,'' \emph{in arXiv}, Feb. 2019.

\bibitem{Wu2016Caching}
L.~Wu and W.~Zhang, ``Caching-based scalable video transmission over cellular
  networks,'' \emph{IEEE Commun. Lett.}, vol.~20, no.~6, pp. 1156--1159, Jun.
  2016.

\bibitem{Wu2018Energy}
H.~Wu and H.~Lu, ``Energy and delay optimization for cache-enabled dense small
  cell networks,'' \emph{in arXiv}, Mar. 2018.

\bibitem{GriewankEvaluating}
A.~Griewank and A.~Walther, \emph{Evaluating Derivatives Principles and
  Techniques of Algorithmic Differentiation, Frontiers in Applied
  Mathematics}.\hskip 1em plus 0.5em minus 0.4em\relax Philadephia, 2000.

\bibitem{Zhang2008Computing}
H.~Zhang, X.~Yi, C.~Zhang, and L.~Dong, ``Computing the high order derivatives
  with automatic differentiation and its application in {Chebyshev's} method,''
  in \emph{Proc. Fourth International Conference on Natural Computation
  (ICNC)}, Jinan, China, Oct. 2008, pp. 1--5.

\bibitem{Song2017OptimalContent}
J.~{Song}, H.~{Song}, and W.~{Choi}, ``Optimal content placement for wireless
  femto-caching network,'' \emph{IEEE Trans. Wireless Commun.}, vol.~16, no.~7,
  pp. 4433--4444, Jul. 2017.

\bibitem{Muller2016Smart}
S.~{Muller}, O.~{Atan}, M.~{van der Schaar}, and A.~{Klein}, ``Smart caching in
  wireless small cell networks via contextual multi-armed bandits,'' in
  \emph{Proc. IEEE Int. Conf. Commun. (ICC)}, Kuala Lumpur, Malaysia, May 2016,
  pp. 1--7.

\bibitem{Xiang2018Secure}
L.~{Xiang}, D.~W.~K. {Ng}, R.~{Schober}, and V.~W.~S. {Wong}, ``Secure video
  streaming in heterogeneous small cell networks with untrusted cache
  helpers,'' \emph{IEEE Trans. Wireless Commun.}, vol.~17, no.~4, pp.
  2645--2661, Apr. 2018.

\bibitem{Zhang2020Double}
Z.~{Zhang}, H.~{Chen}, M.~{Hua}, C.~{Li}, Y.~{Huang}, and L.~{Yang}, ``Double
  coded caching in ultra dense networks: Caching and multicast scheduling via
  deep reinforcement learning,'' \emph{IEEE Trans. Commun.}, vol.~68, no.~2,
  pp. 1071--1086, Feb. 2020.

\bibitem{Yang2016Analysis}
C.~Yang, Y.~Yao, Z.~Chen, and B.~Xia, ``Analysis on cache-enabled wireless
  heterogeneous networks,'' \emph{IEEE Trans. Wireless Commun.}, vol.~15,
  no.~1, pp. 131--145, Jan. 2016.

\bibitem{Afshang2016Modeling}
M.~{Afshang}, H.~S. {Dhillon}, and P.~H. {Joo Chong}, ``Modeling and
  performance analysis of clustered device-to-device networks,'' \emph{IEEE
  Trans. Wireless Commun.}, vol.~15, no.~7, pp. 4957--4972, Jul. 2016.

\bibitem{Haenggi2012Stochastic}
M.~Haenggi, \emph{Stochastic Geometry for Wireless Networks}.\hskip 1em plus
  0.5em minus 0.4em\relax Cambridge University Press, 2012.

\end{thebibliography}

\begin{IEEEbiography}[{\includegraphics[width=1in,height=1.25in,clip,keepaspectratio]{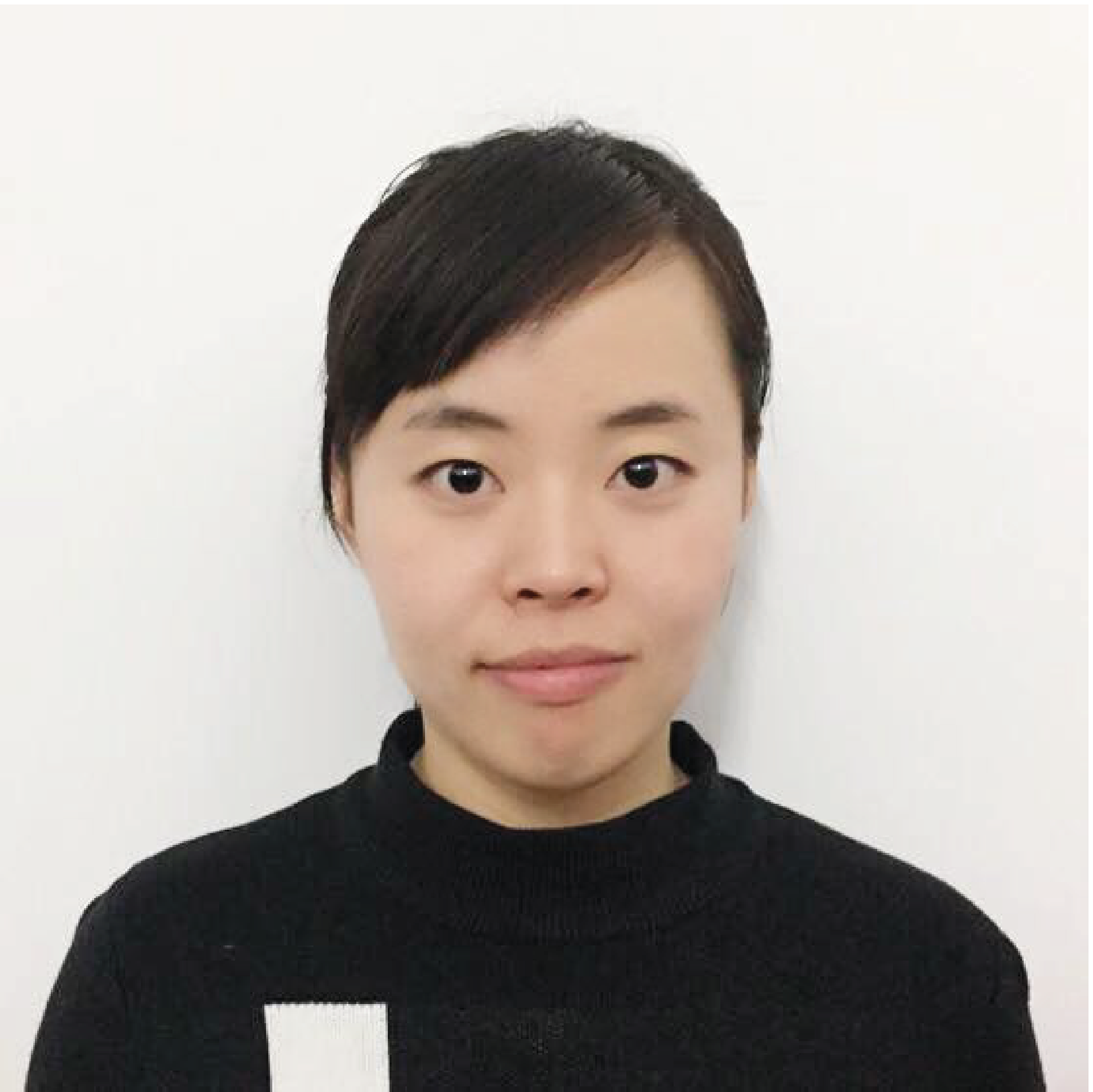}}]{Xuewei Zhang}
received the Ph.D. degree in information and communication information engineering
from the Beijing University of Posts and Telecommunications (BUPT), Beijing, China, in 2020.
She is currently an Associate Professor with the School of Communication and Information Engineering,
Xi'an University of Posts and Telecommunications (XUPT), Xi'an, China.
Her research interests include wireless edge caching, resource allocation and heterogeneous networks.
\end{IEEEbiography}
\begin{IEEEbiography}[{\includegraphics[width=1in,height=1.25in,clip,keepaspectratio]{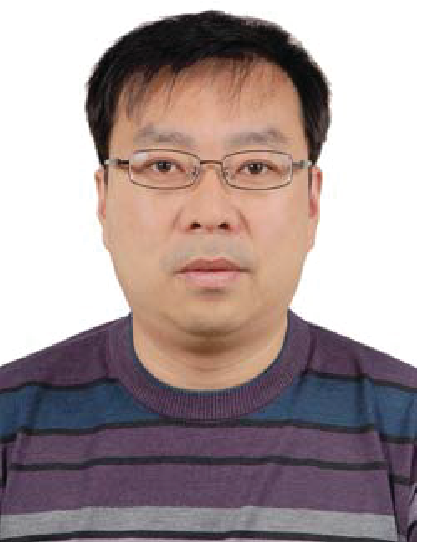}}]{Tiejun Lv}
(M'08-SM'12) received the M.S. and Ph.D. degrees in electronic engineering from the University of Electronic Science and Technology of China (UESTC), Chengdu, China, in 1997 and 2000, respectively. From January 2001 to January 2003, he was a Postdoctoral Fellow with Tsinghua University, Beijing, China. In 2005, he was promoted to a Full Professor with the School of Information and Communication Engineering, Beijing University of Posts and Telecommunications (BUPT). From September 2008 to March 2009, he was a Visiting Professor with the Department of Electrical Engineering, Stanford University, Stanford, CA, USA. He is the author of 3 books, more than 80 published IEEE journal papers and 180 conference papers on the physical layer of wireless mobile communications. His current research interests include signal processing, communications theory and networking. He was the recipient of the Program for New Century Excellent Talents in University Award from the Ministry of Education, China, in 2006. He received the Nature Science Award in the Ministry of Education of China for the hierarchical cooperative communication theory and technologies in 2015.
\end{IEEEbiography}
\begin{IEEEbiography}[{\includegraphics[width=1in,height=1.25in,clip,keepaspectratio]{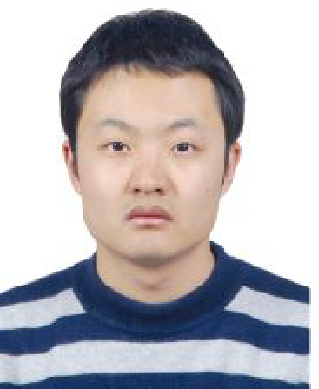}}]{Yuan Ren}
received the B.Eng. degree in information engineering and the Ph.D. degree in signal and information processing
from the Beijing University of Posts and Telecommunications (BUPT), Beijing, China, in 2010 and 2017, respectively.
He is currently a Lecturer with the School of Communication and Information Engineering,
Xi'an University of Posts and Telecommunications (XUPT), Xi'an, China.
His current research interests include green communications, wireless caching, and cooperative communications.
\end{IEEEbiography}
\begin{IEEEbiography}[{\includegraphics[width=1in,height =1.25in,clip,keepaspectratio]{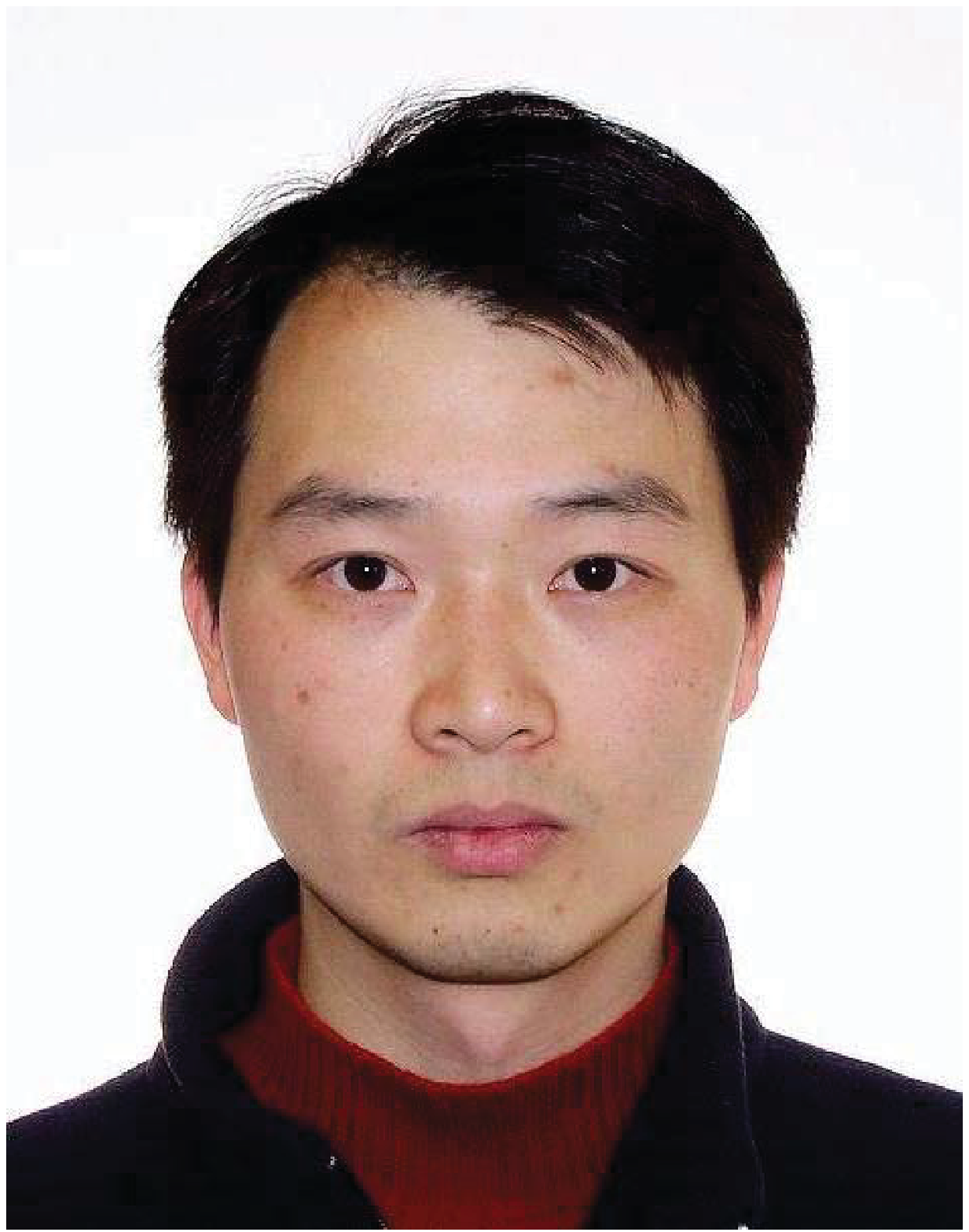}}]{Wei Ni} (M'09-SM'15) received the B.E. and Ph.D. degrees in Electronic Engineering from Fudan University, Shanghai, China, in 2000 and 2005, respectively. Currently,he is a Group Leader and Principal Research Scientist at CSIRO, Sydney, Australia, and an adjunct professor at the University of Technology Sydney and an Honorary Professor at Macquarie University, Sydney. Prior to this, he was a Postdoctoral Research Fellow at Shanghai Jiaotong University from 2005--2008; Deputy Project Manager at the Bell Labs, Alcatel/Alcatel-Lucent from 2005--2008; and Senior Researcher at Devices R\&D, Nokia from 2008--2009. His research interests include signal processing, stochastic optimization, as well as their applications to network efficiency and integrity.

Dr Ni is the Chair of IEEE Vehicular Technology Society (VTS) New South Wales (NSW) Chapter since 2020 and an Editor of IEEE Transactions on Wireless Communications since 2018. He served first the Secretary and then Vice-Chair of IEEE NSW VTS Chapter from 2015--2019, Track Chair for VTC-Spring 2017, Track Co-chair for IEEE VTC-Spring 2016, Publication Chair for BodyNet 2015, and  Student Travel Grant Chair for WPMC 2014.
\end{IEEEbiography}
\begin{IEEEbiography}[{\includegraphics[width=1in,height=1.25in,clip,keepaspectratio]{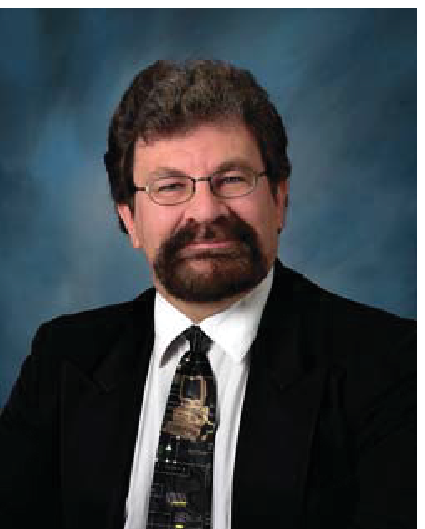}}]{Norman C. Beaulieu}
is an Academician of the Royal Society of Canada (RSC), and an Academician of the Canadian Academy of Engineering (CAE).
He is a Fellow of the Institute of Electrical and Electronics Engineers (IEEE),
a Fellow of the Institution of Engineering and Technology (IET) of the United Kingdom,
a Fellow of the Engineering Institute of Canada (EIC), and a Nicola Copernicus Fellow of Italy.
He is the recipient of the esteemed Natural Sciences and Engineering Research Council (NSERC) of Canada E.W.R. Steacy Memorial Fellowship.
He is the only person in the world to hold both the IEEE Edwin Howard Armstrong Award and the IEEE Reginald Aubrey Fessenden Award,
named for the inventors of Frequency Modulation or FM, and Amplitude Modulation or AM, respectively.
Prof. Beaulieu is a Beijing University of Posts and Telecommunications BUPT Thousand-Talents Scholar.
He holds the third highest Web of Science ISI h-index in the world in the combined areas of communication theory and information theory.
Prof. Beaulieu was awarded the title ``State Especially Recruited Foreign Expert'' certified upon him by Minister of Human Resources and Social Insurance, and Vice Minister of the Organization Department, Yi Weimin.
Pro. Beaulieu is the recipient of the Royal Society of Canada Thomas W.  Eadie Medal, the M¨¦daille K.Y. Lo Medal of the EIC,
and was the subject of a TIME Magazine feature article.
He was also awarded the unique Special University Prize in Applied Science of the University of British Columbia,
and the J. Gordin Kaplan Award for Research of the University of Alberta.
\end{IEEEbiography}
\end{document}